\newtheorem*{rep@theorem}{\rep@title}
\newcommand{\newreptheorem}[2]{%
\newenvironment{rep#1}[1]{%
 \def\rep@title{#2 \ref{##1} (restatement)}%
 \begin{rep@theorem}}%
{\end{rep@theorem}}}
\newtheorem{thm}{Theorem}
\newtheorem*{thm*}{Theorem}
\newtheorem{prop}[thm]{Proposition}
\newtheorem*{prop*}{Proposition}
\newtheorem{lem}[thm]{Lemma}
\newtheorem*{lem*}{Lemma}
\newtheorem*{fact*}{Fact}
\newtheorem{cor}[thm]{Corollary}
\newtheorem*{cor*}{Corollary}
\def\ba#1\ea{\begin{align}#1\end{align}}
\def\ban#1\ean{\begin{align*}#1\end{align*}}
\newcommand{\ot}{\otimes}
\newcommand{\be}{\begin{equation}}
\newcommand{\ee}{\end{equation}}
\def\benum{\begin{enumerate}}
\def\eenum{\end{enumerate}}
\def\squareforqed{\hbox{\rlap{$\sqcap$}$\sqcup$}}
\def\qed{\ifmmode\squareforqed\else{\unskip\nobreak\hfil
\penalty50\hskip1em\null\nobreak\hfil\squareforqed
\parfillskip=0pt\finalhyphendemerits=0\endgraf}\fi}
\def\endenv{\ifmmode\;\else{\unskip\nobreak\hfil
\penalty50\hskip1em\null\nobreak\hfil\;
\parfillskip=0pt\finalhyphendemerits=0\endgraf}\fi}
\newcommand{\bra}[1]{\langle #1|}
\newcommand{\ket}[1]{|#1\rangle}
\newcommand{\tr}{\text{tr}}
\newcommand{\id}{\mathbb{I}}
\newcommand{\<}{\langle}
\renewcommand{\>}{\rangle}
\def\id{{\operatorname{id}}}
\def\be{\begin{equation}}
\def\ee{\end{equation}}
\def\ben{\begin{eqnarray}}
\def\een{\end{eqnarray}}
\def\ot{\otimes}
\def\bei{\begin{itemize}}
\def\eei{\end{itemize}}
\mathchardef\ordinarycolon\mathcode`\:
\def\vcentcolon{\mathrel{\mathop\ordinarycolon}}
\newcommand{\nc}{\newcommand}
 \nc{\proj}[1]{|#1\rangle\!\langle #1 |} 
\nc{\avg}[1]{\langle#1\rangle}
\nc{\conv}{\operatorname{conv}}
\nc{\smfrac}[2]{\mbox{$\frac{#1}{#2}$}} \nc{\Tr}{\operatorname{Tr}}
\nc{\ox}{\otimes} \nc{\dg}{\dagger} \nc{\dn}{\downarrow}
\nc{\lmax}{\lambda_{\text{max}}}
\nc{\lmin}{\lambda_{\text{min}}}
\nc{\csupp}{{\operatorname{csupp}}}
\nc{\qsupp}{{\operatorname{qsupp}}} \nc{\var}{\operatorname{var}}
\nc{\rar}{\rightarrow} \nc{\lrar}{\longrightarrow}
\nc{\poly}{\operatorname{poly}}
\nc{\polylog}{\operatorname{polylog}} \nc{\Lip}{\operatorname{Lip}}
\nc{\Om}{\Omega}
\nc{\wt}[1]{\widetilde{#1}}
\def\>{\rangle}
\def\<{\langle}
\nc{\glneq}{{\raisebox{0.6ex}{$>$}  \hspace*{-1.8ex} \raisebox{-0.6ex}{$<$}}}
\nc{\gleq}{{\raisebox{0.6ex}{$\geq$}\hspace*{-1.8ex} \raisebox{-0.6ex}{$\leq$}}}
\nc{\vholder}[1]{\rule{0pt}{#1}}
\nc{\wh}[1]{\widehat{#1}}
\nc{\h}[1]{\widehat{#1}}
\nc{\ob}[1]{#1}
\def\beq{\begin {equation}}
\def\eeq{\end {equation}}
\def\be{\begin{equation}}
\def\ee{\end{equation}}
\nc{\eq}[1]{(\ref{eq:#1})} 
\nc{\eqs}[2]{\eq{#1} and \eq{#2}}
\nc{\eqn}[1]{Eq.~(\ref{eqn:#1})}
\nc{\eqns}[2]{Eqs.~(\ref{eqn:#1}) and (\ref{eqn:#2})}
\nc{\region}{\cS\cW}
\newenvironment{protocol*}[1]
  {
    \begin{center}
      \hrulefill\\
      \textbf{#1}
  }
  {
    \vspace{-1\baselineskip}
    \hrulefill
    \end{center}
  }
\begin{document}

\title{Generic emergence of classical features in quantum Darwinism}

\author{Fernando G.S.L. Brand\~{a}o}
\affiliation{Quantum Architectures and Computation Group, Microsoft Research, Redmond, WA 98052, USA}
\affiliation{Department of Computer Science, University College London}

\author{Marco Piani}
\affiliation{Department of Physics\&Astronomy and Institute for Quantum Computing \\  University of Waterloo, Waterloo, Ontario, N2L 3G1, Canada}
\affiliation{SUPA and Department of Physics, University of Strathclyde, Glasgow G4 0NG, UK}

\author{Pawe{\l} Horodecki}
\affiliation{National Quantum Information Center of Gda\'{n}sk, 81-824 Sopot, Poland}
\affiliation{Faculty of Applied Physics and Mathematics, Technical University of Gda\'{n}sk, 80-233 Gda\'{n}sk, Poland}

\date{\today}

\begin{abstract}
Quantum Darwinism explains the emergence of classical reality from the underlying quantum reality  by the fact that a quantum system is observed indirectly, by looking at parts of its environment, so that only specific information about the system that is redundantly proliferated to many parts of the environment becomes accessible and objective. However it is not clear under what conditions this mechanism holds true.  Here we rigorously prove that the emergence of classicality is  a general feature of any quantum dynamics:  observers who acquire information about a quantum system indirectly have access at most to classical information about one and the same measurement of the quantum system; moreover, if such information is available to many observers, they necessarily agree. Remarkably, our analysis goes beyond the system-environment categorization. We also provide a full characterization of the so-called quantum discord in terms of local redistribution of correlations.
\end{abstract}

\maketitle

Our best theory of the fundamental laws of physics, quantum mechanics, has counter-intuitive features that are not directly 
observed in our everyday classical reality (e.g., the superposition principle, complementarity, and non-locality). Furthermore, the postulates of quantum mechanics reserve a special treatment to the act of observation, which contrary to its classical counterpart is not a passive act. 
The following fundamental questions then naturally emerge: Through what process does the quantum information contained in a quantum system become classical to an observer? And how come different observers agree on what they see?

The issues of the so-called \emph{quantum-classical boundary} and of the related \emph{measurement problem} dominated large part of the discussions of the early days of quantum mechanics. Indeed, the debate between Bohr and Einstein on the meaning and correctness of quantum mechanics often revolved around the level where quantum effects would disappear---ranging from the microscopic system observed, up to the observer himself. From a practical perspective, our ability to manipulate quantum systems preserving their quantum features has made enormous progresses in recent years---enough to purportedly lead A. Zeilinger to state that ``the border between classical and quantum phenomena is just a question of money"~\cite{AtoZ}.
However, even if we are somewhat pushing the location of the quantum-classical border thanks to our increased experimental ability, a fully satisfactory analysis of the quantum-to-classical transition is still lacking. Such an analysis would both deepen our understanding of the world and conceivably lead to improved technological control over quantum features.

Substantial progress towards the understanding of the disappearance of quantum features was made through the study of \textit{decoherence} \cite{Zur03,Joos03}, where information is lost to an environment.
This typically leads to the selection of persistent \textit{pointer states} \cite{Zur03}, while superpositions of such pointers states are suppressed. Pointer states---and convex combinations thereof---then become natural candidates for classical states. However decoherence by itself does not explain how information about the pointer states reaches the observers, and how such information becomes objective, i.e. agreed upon by several observers. A possible solution to these questions comes from an intriguing idea termed \textit{quantum Darwinism} \cite{Zur09, ZQZ09, BHZ08, BHZ06, BKZ05, OPZ05, OPZ04, RZZ12, RZ11, RZ10, ZQZ10, ZZ13, korbicz1,korbicz2}, which promotes the environment from passive sink of coherence for a quantum system to the active carrier of information about the system (see Figure~\ref{fig:qdarwisnism}). In this view, pointer observables correspond to information about a physical system that the environment---the same environment responsible for decoherence---selects and proliferates, allowing potentially many observers to have access to it. 

\begin{figure}%
\begin{center}
\subfloat[]{\includegraphics[scale=0.35,trim = 0cm -0.2cm 0cm 0cm]{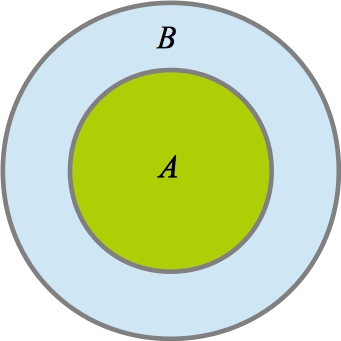}}\hspace{0.08\textwidth}
\subfloat[]{ \includegraphics[scale=0.35, trim = 0cm 0cm 0cm 0cm]{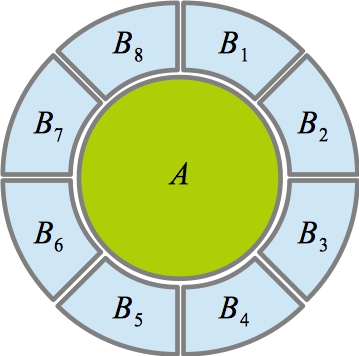}}\\
\subfloat[]{ \includegraphics[scale=0.35]{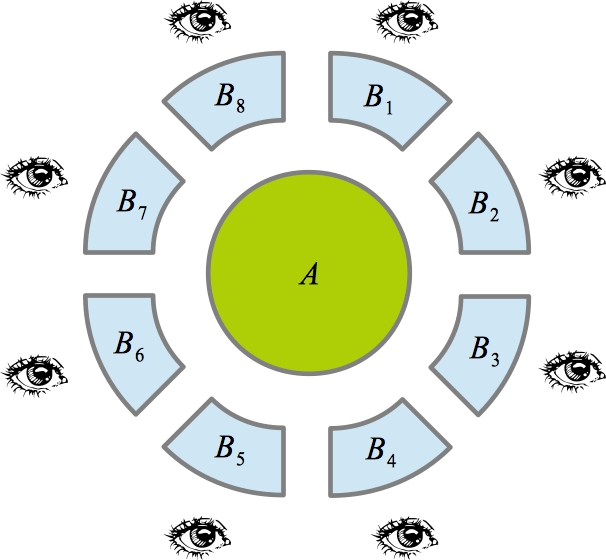}}
\end{center}
\caption{The mechanism for the emergence of objectivity known as quantum Darwinism: the environment as carrier of information. (a) The disappearance of quantum coherence in a system $A$ can be explained in terms of decoherence induced by the interaction with an environment $B$. (b) The environment $B$ responsible for decoherence can be thought as being made up of several parts $B_1$, $B_2$, ..., $B_n$. (c) Observers have indirect access to (the information about) system $A$ through their ability to interact with the environment. Each observer is expected to be able to probe only some part of the environment. Only information about the system that is proliferated in the many parts of the environment is effectively available to the observers, and is necessarily classical and objective.}
\label{fig:qdarwisnism}
\end{figure}

The ideas of quantum Darwinism are beautiful and physically appealing. Significant progress was achieved in a sequence of papers \cite{Zur09, ZQZ09, BHZ08, BHZ06, BKZ05, OPZ05, OPZ04, RZZ12, RZ11, RZ10, ZQZ10, ZZ13, korbicz1, korbicz2}. However we are still far from understanding how generally the ideas of quantum Darwinism apply.
In particular, for example, given any specific interaction Hamiltonian it is not clear whether and to 
what extent classicality sets in. A careful and far-from-trivial analysis must in principle be separately performed for each specific model  (see the papers cited above). So an important question is: Suppose we do not know  {\it anything} about the interaction between the system and its environment; can we still expect some emergence of classicality?
As we shall see below, the answer is positive. Surprisingly, in our analysis what matters 
is just the relation of the Hilbert space dimensionality of an elementary 
subsystem to the number of all the subsystems involved in the interaction, with no dependence on any detail of the dynamics. Thus, one main consequence of our results, which are very general but still derived in full mathematical rigor through information-theoretic techniques, is a deep qualitative change in the study of the emergence of classicality: from proving it in given 
models to showing that it is present in some specific sense (see below) in {\it any} model 
involving sufficiently many subsystems of discrete variables.

We remark that we prove that quantum Darwinism applies 
beyond the system-environment categorization: in a global system composed of many initially uncorrelated subsystems, any subsystem is being objectively measured 
by the other ones (see Fig. \ref{fig:manysystems}).
\begin{figure}
\begin{center}
\subfloat[]{\includegraphics[scale=0.35]{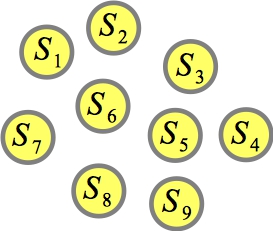}}\hspace{0.1\textwidth}
\subfloat[]{ \includegraphics[scale=0.35]{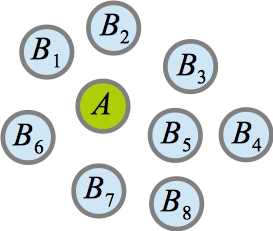}}\hspace{0.1\textwidth}
\subfloat[]{ \includegraphics[scale=0.35]{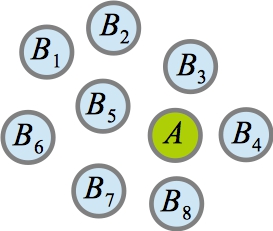}}
\end{center}
\caption{Changing perspective: quantum Darwinism beyond the system-environment categorization. (a) We consider the case where we deal with many systems $S_1, ..., S_m$ that are initially uncorrelated. (b)-(c) The role of the system of interest $A$ can be played indifferently by any subsystem $S_i$, with the remaining subsystems playing the role of the 
(elements of the) environment $B$. As it can be seen in the Results part of the text our results do not depend on any assumed physical symmetry: they are ``symmetric'' themselves and can be applied for any choice of assignment ``system-environment'' and any global interaction. We can conclude that any system is being ``objectively measured'' by the other systems.}
\label{fig:manysystems}
\end{figure}
Most importantly, our approach allows to exactly identify which aspects of emergent objectivity are independent from the specific evolution/interaction, and which do instead depend on the model.  Indeed the present analysis splits the concept of emergent objectivity into two elements:

\begin{itemize}
  \item  (\textit{objectivity of observables}) Observers that access a quantum system by probing part of the environment of the system can only learn about the measurement of a \textit{preferred observable} (usually associated to a measurement on the pointer basis determined by the system-environment interaction \cite{Zur09}). The preferred observable should be independent of which part of the environment is being probed. 
  \item (\textit{objectivity of outcomes}) Different observers that access different parts of the environment have 
(close to) full access to the information about the preferred observable and will agree on the outcome obtained 
(cf. the agreement condition of Ref. \cite{korbicz1}).  
\end{itemize}

The two properties above ensure that the information about the quantum system becomes objective,  
being accessible simultaneously to many observers, and agreed upon. As we report in the Results, 
the first aspect of objectivity---objectivity of observables---is \emph{always} present, i.e., any subsystem is objectively measured by 
the others. On the other hand, the validity of the objectivity of outcomes depends on how much knowledge about the preferred observable is available to the elementary subsystems.

Finally, we make use of our techniques to prove in full generality (i.e., going well beyond the pure-state case treated in~\cite{SZ13}) that when information is distributed to many parties, the minimal average loss in correlations is equal to the quantum discord~\cite{review}, a quantity that has recently attracted much attention but was still missing a full clear-cut operational characterization.

\section{Results}

\subsection{Physical motivation and notation}
\label{sec:motivationnotation}

We want to analyze how the quantum information content of a physical system spreads to (many parts of) its environment. To model this, although our mathematical description in terms of quantum channels (see shortly below) allows for a more general scenario, consider $n+1$ systems $S_1, \ldots, S_{n+1}$ (see Fig.~\ref{fig:manysystems}(a)). These may  constitute a closed system, or be part of a larger system. We focus our attention on one system $S_i$, which we shall call $A$ (see, e.g., Fig.~\ref{fig:manysystems}(b)), and we think of the others systems, now denoted $B_1,\ldots,B_n$, as of fragments of its environment. All our results assume that $A$ is finite-dimensional, with dimension $d_A$, but we do not need such an assumption for the systems $B_1,\ldots,B_n$. Suppose that $A$ is initially decorrelated from $B_1,\dots,B_n$. Independently of any detail of the closed (that is, unitary) or open dynamics of $S_1,\ldots,S_{n+1}$, this condition ensures that the effective transfer of quantum information from $A$ to $B_1, \ldots, B_n$ is represented by a quantum channel  (also called a quantum operation)---a completely positive trace-preserving (cptp) map---$\Lambda : {\cal D}(A) \rightarrow {\cal D}(B_1 \otimes \ldots \otimes B_n)$, with ${\cal D}(X)$ the set of density matrices over the Hilbert space $X$ (see Figure~\ref{fig:dynamics})~\cite{nielsenchuang}. We remark that the role of $A$ can be taken by any $S_i$, as long as it satisfies the condition of being finite-dimensional and initially uncorrelated from the other systems (compare Fig.~\ref{fig:manysystems}(b)) and Fig.~\ref{fig:manysystems}(c)).

\begin{figure}%
\centering
\includegraphics[scale=0.35]{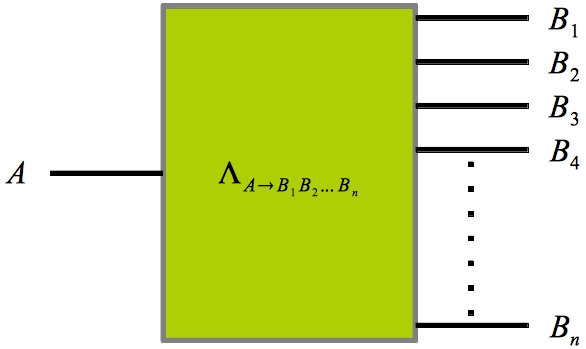}
\caption{The interaction with the environment as quantum channel. The transfer of information from a quantum system $A$ to the many parts $B_1B_2\ldots B_n$ of the environment can be described by a quantum channel, i.e. a completely positive trace-preserving (cptp) map $\Lambda$. Information flows from the left to the right.}
\label{fig:dynamics}
\end{figure}

Given two quantum operations $\Lambda_1$ and $\Lambda_2$, the diamond norm of their difference is defined as $\Vert \Lambda_1 - \Lambda_2 \Vert_{\Diamond}:= \sup_{X} \Vert (\Lambda_1 - \Lambda_2) \otimes \id (X) \Vert_1 / \Vert X \Vert_1$, with the trace norm $\Vert X \Vert_1 := \tr((X^{\cal y}X)^{1/2})$.  The diamond norm $\Vert \Lambda_1 - \Lambda_2 \Vert_{\Diamond}$ gives the optimal bias of distinguishing the two operations $\Lambda_1, \Lambda_2$ by any process allowed by quantum mechanics (i.e. choosing the best possible initial state of the system of interest and of an ancilla system, applying one of the quantum operations to the first system, and performing the best possible measurement to distinguish the two possibilities)~\cite{watrouslectures}. Thus if $\Vert \Lambda_1 - \Lambda_2 \Vert_{\Diamond} \leq \varepsilon$, the two maps represent the same physical dynamics, up to error $\varepsilon$.
Finally, let $\tr_{\backslash X}$ be the partial trace of all subsystems except $X$.

\subsection{Objectivity of Observables}

Our main result is the following (see Figure~\ref{fig:main}):

\begin{figure}%
\centering
\subfloat[]{ \includegraphics[scale=0.35]{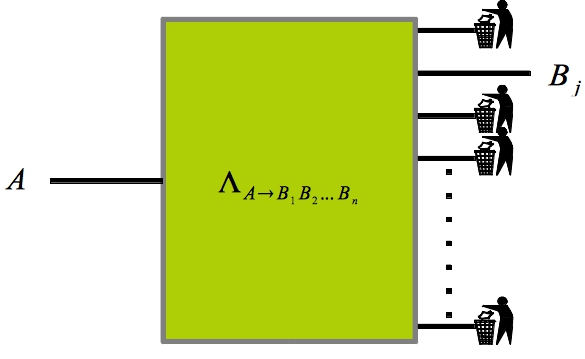}}\quad
\subfloat[]{ \includegraphics[scale=0.35]{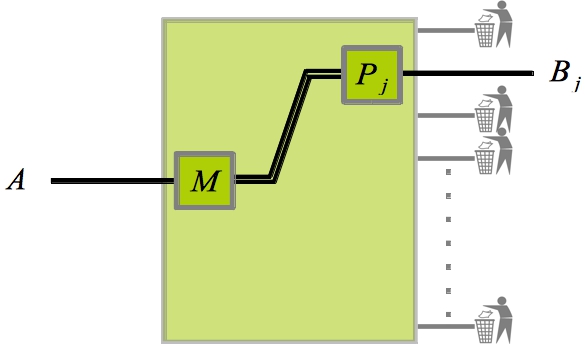}}
\caption{The main result. (a) The mapping from a system $A$ to the many parts $B_1B_2\ldots B_n$ of the environment induces an effective map from $A$ to each part of the environment $B_j$, corresponding to tracing out (i.e., ``throwing away'') the rest of the environment. (b) For most of the effective maps $A\rightarrow B_j$, the dynamics can be well approximated by a measure-and-prepare quantum channel, i.e. by a process where the results of a measurement $M$ on the input are used to decide which output to create at  a later preparation stage $P_j$. A key point  that we prove is that, while the preparation process depends on which part $B_j$ of the environment one considers (symbolized by the ``$j$'' in $P_j$), the measurement stage is independent of it. This implies that only classical information, and only about a specific measurement on $A$, is at best accessible to each observer who can only probe a fragment $B_j$ of the environment. Single lines indicate quantum information (qubits); double lines correspond to classical information (bits). Information flows from left to right.}
\label{fig:main}
\end{figure}

\def\mainthm{

Let $\Lambda : {\cal D}(A) \rightarrow {\cal D}(B_1 \otimes \ldots \otimes B_n)$ be a cptp map. Define $\Lambda_j :=  \tr_{\backslash B_j} \circ \Lambda$ as the effective dynamics from ${\cal D}(A)$ to ${\cal D}(B_j)$ and fix a number $\delta > 0$. Then there exists a measurement, described by a positive-operator-valued measure (POVM) $\{ M_{k} \}_k$  ($M_{k} \geq 0$, $\sum_{k} M_{k} = I$~\cite{nielsenchuang}), and a set $S \subseteq \{1, \ldots, n \}$ with $|S| \geq (1-\delta)n$ such that for all $j \in S$,
\begin{equation}
\label{mainbound}
\left \Vert \Lambda_j - {\cal E}_{j}  \right \Vert_{\Diamond} \leq \left(  \frac{27 \ln(2) (d_{A})^6 \log(d_A)}{n\delta^3} \right)^{1/3}, 
\end{equation}
with
\begin{equation}
\label{measureandprepare}
{\cal E}_{j}(X) := \sum_{k} \tr(M_{k} X) \sigma_{j, k},
\end{equation}
for states $\sigma_{j, k} \in {\cal D}(B_j)$. Here $d_A$ is the dimension of the space $A$.

}
\begin{thm}\label{main}
\mainthm
\end{thm}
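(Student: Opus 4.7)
The plan is to invoke the Choi-Jamio\l{}kowski correspondence to recast the theorem as a statement about bipartite marginals of a multipartite state, then extract the common POVM via an information-theoretic chain-rule argument in the spirit of quantum de Finetti theorems for non-symmetric states.

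First, I associate to $\Lambda$ its Choi state $\rho_{RE}:=(\id_R\ot \Lambda)(\ket{\Phi}\!\bra{\Phi}_{RA})$ on $R\ot E$, where $E:=B_1\ot\cdots\ot B_n$ and $R\cong A$ carries the maximally entangled state $\ket{\Phi}_{RA}$. Then $\rho_{RB_j}$ is proportional to the Choi state of $\Lambda_j$, while a measure-and-prepare channel $\mathcal{E}_j$ of the form (\ref{measureandprepare}) has Choi state of the specific separable form $d_A^{-1}\sum_k M_k^T \ot \sigma_{j,k}$. Diamond-norm closeness of the channels is controlled, up to a factor of $d_A$, by trace-norm closeness of the Choi states, so the task reduces to approximating each $\rho_{RB_j}$ for $j\in S$ by such a separable state with a single universal POVM $\{M_k\}$. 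The crucial global constraint is the mutual-information bound $I(R{:}E)_\rho\le 2\log d_A$, coming from $H(\rho_R)=\log d_A$.

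Second, I seek $\{M_k\}$ by demanding that, after applying the associated measurement channel $\mathcal{M}$ (with classical output $K$ and Stinespring residual $R'$), the conditional mutual information $I(R'{:}B_j\,|\,K)$ is small for $j$ outside a small exceptional set. Pinsker's inequality then yields a trace-norm approximation of $\rho_{RB_j}$ by a separable state of precisely the form $\sum_k p_k \tau_k^R \ot \sigma_{j,k}^{B_j}$ with $(p_k,\tau_k^R)$ depending only on $\{M_k\}$, matching (\ref{measureandprepare}). To construct $\mathcal{M}$, I proceed iteratively: begin with the trivial one-outcome POVM, and whenever some index $j^*$ still violates the threshold, refine by appending a measurement extracting additional classical information about the worst offender. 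The chain rule $I(R{:}E)=\sum_j I(R{:}B_j\,|\,B_{<j})$ combined with each refinement consuming at least a fixed amount $\tau$ of the total budget $2\log d_A$ bounds both the number of refinements needed and, via a Markov step, the size of the exceptional set.

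The main obstacle is the quantitative bookkeeping in the iterative step: refining $\mathcal{M}$ to improve the bound at one index $j^*$ can in principle alter the conditional mutual informations at the other indices, so one must track how the $2\log d_A$ budget is consumed globally rather than locally. Once this is controlled, the cube-root scaling in (\ref{mainbound}) arises naturally from balancing three error sources against the threshold $\tau$: Pinsker's inequality contributes a square root in $\tau$, the trace-norm to diamond-norm conversion for Choi states contributes a factor of $d_A$, and Markov's inequality on the $\delta n$ exceptional indices contributes the $1/\delta$ factor, with optimization over $\tau$ producing both the $1/3$ power and the $d_A^6\log d_A/(n\delta^3)$ dependence.
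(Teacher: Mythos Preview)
Your skeleton is right --- Choi state, mutual-information budget via the chain rule, Pinsker, Choi-to-diamond conversion, Markov --- and these are exactly the paper's tools. The gap is in how the common POVM is actually produced, and this is where your route and the paper's diverge substantively. The paper never builds $\{M_k\}$ by measuring (or iteratively refining a measurement on) the reference $R$. Instead it measures a small \emph{random} collection $B_{j_1},\ldots,B_{j_{q-1}}$ of environment fragments in optimized bases; the POVM on $A$ is then $\{d_A\, p(z)(\rho_A^z)^T\}_z$, indexed by the joint outcome $z$. The device that replaces your unresolved ``bookkeeping'' is an averaging recursion: with $\pi$ the Choi state after applying quantum-classical channels to \emph{all} the $B_j$'s, one sets
\[
f(k):=\mathop{\mathbb{E}}_{(j_1,\ldots,j_k)}\ \max_{M_{j_1},\ldots,M_{j_k}} I(A:B_{j_1}\ldots B_{j_k})_\pi \ \le\ \log d_A,
\]
and shows $f(k)\ge f(k-1)+\min_{M_{j_1},\ldots,M_{j_{k-1}}}\mathop{\mathbb{E}}_{j_k}\max_{M_{j_k}}I(A:B_{j_k}\,|\,B_{j_1},\ldots,B_{j_{k-1}})_\pi$. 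Summing and pigeonholing in $q\le k$ yields a \emph{fixed} conditioning set and fixed measurements for which the average conditional mutual information over the free index is at most $(\log d_A)/k$. This min--max unwinding in expectation is precisely what sidesteps the obstacle you flag: one never tracks how refining at one index perturbs the others, because everything is controlled in expectation from the start. Your greedy scheme has no analogue of this step, and you correctly note you do not know how to close it.

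Two further symptoms that the argument has not been carried through. First, ``Stinespring residual $R'$'' signals a measurement on $R$, whereas ``extracting additional classical information about the worst offender $B_{j^*}$'' signals measuring an environment fragment; these are different constructions and the proposal oscillates between them. Second, your error budget lists only a single factor of $d_A$ (Choi-to-diamond) yet asserts the final $d_A^6$. In the paper, the missing $d_A^4$ comes from a lemma you do not mention: because the free index $B_j$ is itself measured in $\pi$, Pinsker only controls $\|\id_A\otimes M_j(\rho_{AB_j}-\text{sep})\|_1$, and stripping the local measurement $M_j$ costs a factor $d_A^2$ in trace norm. If your route genuinely avoided measuring the free $B_j$, you would beat the stated dimension dependence --- a sign that the scaling was pattern-matched to the theorem rather than derived from your own outline.
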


As we mentioned before, the diamond-norm distance on the left-hand side of Eq. \eqref{mainbound} represents how different the two physical processes $\Lambda_j$ and ${\cal E}_{j}$ are: the smaller the diamond norm, the more similar the processes, to the extent that they can become indistinguishable. The right-hand side of Eq. \eqref{mainbound} is a bound on such a distinguishability that 
for fixed $\delta$---or even for $\delta$ decreasing with $n$ but not too fast, e.g., for $\delta=n^{-\frac{1-\eta}{3}}$, for any $0<\eta<1$---becomes smaller and smaller as $n$ increases. So, for fixed $d_A$, in the case where we consider an environment with a large number of parts $n$ (e.g., $10^{15}$), for all environment parts but $\delta n$ of them the bound on the right-hand side of Eq. \eqref{mainbound} is very close to zero, i.e. the effective dynamics is ${\cal E}_{j}$ for all practical purposes. 

The operation ${\cal E}_{j}$ in Eq. (2) is termed a \textit{measure-and-prepare} map, since it can be implemented by first measuring the system with the POVM $\{ M_{k} \}_k$ and then preparing a state $\sigma_{j, k}$ depending on the outcome obtained~\cite{entanglementbreaking}. It is clear that an observer that has access to ${\cal E}_{j}(\rho)$ can at most learn about the measurement of the POVM $\{ M_{k} \}_k$ on $\rho$ (but possibly not even that if the states $\{ \sigma_{j, k} \}_k$ are not well distinguishable).

A key aspect of the theorem is that the measurement $\{ M_{k} \} _k$ is \textit{independent} of $j$. In words, the theorem says that the effective dynamics from $A$ to $B_j$, for almost all $j \in \{1, \ldots, n \}$, is close to a measure-and-prepare channel ${\cal E}_{j}$, with the associated measurement $\{ M_{k} \} _k$ the \textit{same} for all such $j$. From the perspective of single observers, the evolution $\Lambda$ is well approximated by a measurement of $A$, followed by the distribution of the classical result, which is finally ``degraded'' by a \emph{local} encoding that, for each $B_j$, produces a quantum state $\sigma_{j,k}$ upon receiving the result $k$. 

Therefore the first feature of quantum Darwinism (objectivity of observables) is completely general!  We can interpret $\{ M_k \}_k$ as the \emph{pointer observable} of the interaction $\Lambda$. Note also that the bound is independent of the dimensions of the $B$ subsystems, being therefore very general. Note, however, the dependence on the dimension $d_A$ of the system $A$. Although the functional form of this dependence might be improved, it is clear that no bound independent of $d_A$ can exist. Indeed, suppose $A = A_1, \ldots, A_n$ and consider the noiseless channel from $A$ to $B_1, \ldots, B_n$. It is clear that a dimension-independent statement of the theorem would fail.

\subsection{Objectivity of Outcomes}

We note that Theorem \ref{main} does not say anything about the second part of quantum Darwinism, namely objectivity of outcomes. It is clear that in full generality this latter feature does not hold true. Indeed, as observed already in Ref. \cite{RZZ12}, if $\Lambda$ is a Haar random isometry from $A$ to $B_1, \ldots, B_n$, then for any $i$ for which $B_i$ has less than half the total size of the environment, the effective dynamics from $A$ to $B_{i}$ will be very close to a completely depolarizing one, mapping any state to the maximally mixed state. Therefore objectivity of outcomes must be a consequence of the special type of interactions we have in nature, instead of a consequence of the basic rules of quantum mechanics (in contrast, Theorem \ref{main} shows that objectivity of observables \textit{is} a consequence only of the structure of quantum mechanics).

Can we understand better the conditions under which objectivity of outcomes holds true? First let us present a strengthening of Theorem \ref{main}, where we consider subsets of the environment parts. Let $[n] := \{1, \ldots, n \}$.

\def\cormany{

Let $\Lambda : {\cal D}(A) \rightarrow {\cal D}(B_1 \otimes \ldots \otimes B_n)$ be a cptp map. For any subset $S_t \subseteq [n]$ of $t$ elements, define $\Lambda_{S_t} :=  \tr_{\backslash \cup_{l \in S_t} B_l} \circ \Lambda$ as the effective channel from ${\cal D}(A)$ to ${\cal D}(\bigotimes_{l \in S_t} B_l)$. Then for every $\delta > 0$ there exists a measurement $\{ M_{k} \}_k$  ($M_{k} \geq 0$, $\sum_{k} M_{k} = I$) such that for more than a $(1-\delta)$ fraction of the subsets $S_t \subseteq [n]$,
\begin{equation}
\left \Vert \Lambda_{S_t} - {\cal E}_{S_t}  \right \Vert_{\Diamond} \leq \left(  \frac{27 \ln(2) (d_{A})^6 \log(d_A)t }{n\delta^3} \right)^{1/3}, 
\end{equation}
with
\begin{equation}
{\cal E}_{S_t}(X) := \sum_{k} \tr(M_{k} X) \sigma_{S_t, k},
\end{equation}
for states $\sigma_{S_t, k} \in {\cal D}(\bigotimes_{l \in S_t} B_l)$.

}

\begin{thm} \label{cormany}
\cormany
\end{thm}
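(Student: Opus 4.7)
The plan is to reduce Theorem~\ref{cormany} to Theorem~\ref{main} by regrouping subsystems: the cube-root bound in Theorem~\ref{cormany} coincides with that of Theorem~\ref{main} under the substitution $n \mapsto n/t$, which is the number of effective subsystems obtained by grouping the $B_j$'s into blocks of size $t$.

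First, I would fix any partition of $[n]$ into $m = \lfloor n/t \rfloor$ blocks $P_1,\ldots,P_m$ of size $t$ and treat $C_i := \bigotimes_{j \in P_i} B_j$ as atomic subsystems, viewing $\Lambda$ as a cptp map from $\mathcal{D}(A)$ to $\mathcal{D}(C_1 \otimes \cdots \otimes C_m)$. Applying Theorem~\ref{main} with parameter $\delta$ produces a POVM $\{M_k\}_k$ on $A$ and a set $\tilde{S} \subseteq [m]$ with $|\tilde{S}| \geq (1-\delta)m$ such that for every $i \in \tilde{S}$,
\begin{equation*}
\|\Lambda_{C_i} - \mathcal{E}_{C_i}\|_{\Diamond} \leq \left(\frac{27 \ln(2)\,d_A^{6}\,\log(d_A)}{m\,\delta^{3}}\right)^{1/3},
\end{equation*}
which upon substituting $m = n/t$ gives exactly the target bound, but so far only for the $m$ subsets that happen to be blocks of the chosen partition.

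The main task is then to upgrade from ``most blocks of one fixed partition are good'' to ``most of the $\binom{n}{t}$ size-$t$ subsets are good''. My strategy is to symmetrize over random partitions: the POVM $\{M_k\}_k$ furnished by Theorem~\ref{main} can be taken to depend only on $\Lambda$ itself (being invariant under permutations of the $B_j$'s), so the same POVM serves every partition simultaneously. Averaging the Theorem~\ref{main} bound over a uniformly random partition of $[n]$ into blocks of size $t$, and using that by symmetry each size-$t$ subset appears as a block with the same probability $(n/t)/\binom{n}{t}$, a double-counting argument yields that at least a $(1-\delta)$ fraction of all $\binom{n}{t}$ subsets are good.

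The principal obstacle is verifying that a single POVM really works uniformly across all partitions, that is, that the POVM produced in the proof of Theorem~\ref{main} is intrinsic to $\Lambda$ and does not depend on the block decomposition of the output space. If this universality is not immediate, the fallback is to directly mimic the proof of Theorem~\ref{main} with the information-theoretic averaging performed over random size-$t$ subsets instead of individual indices, relying on the chain rule $I(K : B_{\pi(1)}\cdots B_{\pi(t)}) = \sum_{i=1}^{t} I(K : B_{\pi(i)}\mid B_{\pi(1)}\cdots B_{\pi(i-1)})$ together with averaging over uniformly random $\pi \in S_n$, to obtain $\E_{S_t}\|\Lambda_{S_t} - \mathcal{E}_{S_t}\|_{\Diamond}^{3} \leq 27\ln(2)\,d_A^{6}\,\log(d_A)\,t/n$; Markov's inequality then delivers the final statement.
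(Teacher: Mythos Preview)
Your primary approach---applying Theorem~\ref{main} to a fixed partition and then symmetrizing over random partitions---does not go through, for exactly the reason you flag. The POVM produced in the proof of Theorem~\ref{main} is \emph{not} intrinsic to $\Lambda$: it is $\{d_A\,p(z)(\rho_A^z)^T\}_z$, where $\rho_A^z$ is the state of $A$ after measuring a specific tuple of output subsystems with specific (optimized) local POVMs. Change the block decomposition and both the subsystems being measured and the optimal measurements on them change, so the resulting POVM on $A$ changes too. There is no permutation symmetry of $\Lambda$ to appeal to (that was precisely the assumption this paper removes relative to Chiribella--D'Ariano), so ``invariant under permutations of the $B_j$'s'' is not available. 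Hence the averaging-over-partitions step would produce a different POVM for each partition, and you could not conclude anything about a single POVM working for a $(1-\delta)$ fraction of all $\binom{n}{t}$ subsets.

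Your fallback, however, is exactly what the paper does: it reruns the chain-rule/Pinsker/Choi--Jamio{\l}kowski argument of Theorem~\ref{main} verbatim, with individual $B_j$'s replaced by random non-overlapping size-$t$ blocks $C_{j_1},\ldots,C_{j_k}$. The only change is that the final term $2k/n$ becomes $2kt/n$ (since removing $k$ blocks removes $kt$ elementary subsystems), which after optimizing over $k$ and applying Markov's inequality gives the stated bound with the extra factor of $t$. So your fallback is correct and coincides with the paper's proof; the primary route should be dropped.
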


Theorem \ref{cormany} says that the effective dynamics to $B_{j_1}, \ldots B_{j_t}$ is close to a measure-and-prepare channel, for most groups of parts of the environment $(j_1, \ldots, j_t)$.
Let us discuss the relevance of this generalization to the objectivity of outcomes question.

Let $B_{j_1}, \ldots, B_{j_t}$ be a block of sites such that the effective dynamics from $A$ to $B_{j_1}, \ldots, B_{j_t}$ is well approximated by
\begin{equation} \label{channelE}
{\cal E}(X) := \sum_{k} \tr(M_{k} X) \sigma_{B_{j_1}, \ldots, B_{j_t}, k},
\end{equation}
for the pointer POVM $\{ M_{k} \}_k$ and states $\{ \sigma_{B_{j_1}, \ldots, B_{j_t}, k}\}_k$. From Theorem \ref{cormany} we know that this will be the case for most of the choices of $B_{j_1}, \ldots, B_{j_t}$. As we mentioned before, for many $\Lambda$ the information about the pointer observable is hidden from any small part of the environment and thus outcome objectivity fails. Suppose however that the $t$ observers having access to $B_{j_1}, \ldots, B_{j_t}$ \textit{do have} close to full information about the pointer observable.  We now argue that this assumption implies objectivity of outcomes.

 To formalize it we consider the guessing probability of an ensemble $\{ p_i, \rho_i \}$ defined by
\begin{equation}
p_{\text{guess}}(\{ p_i, \rho_i  \}) := \max_{ \{ N_i \} } \sum_i p_i \tr(N_i \rho_i),
\end{equation}
where the maximization is taken over POVMs $\{ N_{i} \}_i$. If the probability of guessing is close to one, then one can with high probability learn the label $i$ by measuring the $\rho_i$'s. We have

\begin{prop}  \label{guessin}
Let ${\cal E}$ be the channel given by Eq. (\ref{channelE}). Suppose that for every $i = \{1, \ldots, t\}$ and $\delta > 0$,
\begin{equation} \label{availabilitymain}
\min_{\rho \in {\cal D}(A)}  p_{\text{guess}}( \{ \tr(M_k \rho), \sigma_{B_{j_i}, k} \}    ) \geq 1 - \delta.
\end{equation}
Then there exists POVMs $\{ N_{B_{j_1}, k} \}_k, \ldots, \{ N_{B_{j_t}, k} \}_k$ such that
\begin{equation}
\min_{\rho}  \sum_{k} \tr(M_k \rho)  \tr \left(  \left( \bigotimes_{i} N_{B_{j_{i}}, k} \right) \sigma_{B_{j_1} \ldots B_{j_t}, k}   \right) \geq 1 - 6 t \delta^{1/4}.
\end{equation}
\end{prop}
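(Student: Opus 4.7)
The plan is to reduce the claim to two clean steps: first, to extract from the per-$\rho$ hypothesis a single POVM per party that works uniformly in $\rho$; and second, to combine these local POVMs into a product POVM whose failure probability is controlled by a quantum union bound.

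For the first step, I would fix an index $i$ and consider the bilinear function
\[
f(\rho, N) = \sum_k \tr(M_k \rho)\,\tr(N_k \sigma_{B_{j_i}, k}),
\]
with $\rho$ ranging over ${\cal D}(A)$ and $N=\{N_k\}_k$ over POVMs on $B_{j_i}$ with outcome set matching $\{M_k\}_k$. Both domains are convex and compact and $f$ is separately linear, so Sion's minimax theorem gives $\min_\rho \max_N f = \max_N \min_\rho f$. The hypothesis then promotes itself to the existence of a single POVM $\{N_{B_{j_i}, k}\}_k$ such that, writing $e_{i,k} := 1 - \tr(N_{B_{j_i}, k}\sigma_{B_{j_i}, k})$,
\[
\lambda_{\max}\!\left(\sum_k M_k\, e_{i,k}\right) \leq \delta.
\]

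For the second step, I would form the product POVM whose $k$-th element is $\bigotimes_{i=1}^t N_{B_{j_i}, k}$, corresponding to all parties announcing the same label $k$. A short induction on $t$ using only $0 \leq N_{B_{j_i}, k} \leq I$ yields the operator inequality
\[
I - \bigotimes_{i=1}^t N_{B_{j_i}, k} \;\leq\; \sum_{i=1}^t I \otimes \cdots \otimes (I - N_{B_{j_i}, k}) \otimes \cdots \otimes I.
\]
Pairing both sides with $\sigma_{B_{j_1}\ldots B_{j_t}, k}$ (whose marginal on $B_{j_i}$ is $\sigma_{B_{j_i}, k}$ by definition of the reduced state) bounds the joint failure probability by $\sum_i e_{i,k}$. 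Averaging over $k$ with weights $\tr(M_k \rho)$ and invoking the Step-1 operator bound for each of the $t$ parties yields
\[
\sum_k \tr(M_k \rho)\,\tr\!\left(\bigotimes_i N_{B_{j_i},k}\,\sigma_{B_{j_1}\ldots B_{j_t},k}\right) \geq 1 - \sum_i \tr\!\left(\rho\sum_k M_k\, e_{i,k}\right) \geq 1 - t\delta,
\]
uniformly in $\rho$, which is in fact stronger than the stated bound $1 - 6t\delta^{1/4}$.

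The main obstacle is the first step: the hypothesis only guarantees that a $\rho$-dependent POVM achieves guessing probability $1-\delta$, while the product construction in Step 2 needs a single POVM per party (otherwise the candidate ``agreement'' POVM is not even well defined). Minimax resolves this cleanly thanks to bilinearity and compactness of both feasible sets; the subsequent operator union bound is then standard and incurs no quantum-specific losses.
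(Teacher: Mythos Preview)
Your proposal is correct and is cleaner than the paper's own argument. Both proofs share Step~1 verbatim: the paper also invokes Sion's minimax theorem to pass from a $\rho$-dependent optimal POVM to a single POVM $\{N_{B_{j_i},k}\}_k$ per party satisfying $\sum_k \tr(M_k\rho)\,e_{i,k}\le\delta$ for every $\rho$. The divergence is in Step~2. The paper does not use your operator union bound; instead it splits the outcome set into a ``good'' set $G=\{k:\tr(N_{B_{j_i},k}\sigma_{B_{j_i},k})\ge 1-\sqrt{\delta}\ \forall i\}$ and its complement, bounds the weight of the complement by a Markov argument, and on $G$ appeals to the Gentle Measurement Lemma to control $\tr\big((\bigotimes_i N_{B_{j_i},k})\sigma_{B_{j_1}\ldots B_{j_t},k}\big)$ via the marginals. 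This is where the $\delta^{1/4}$ and the constant $6$ come from: one square root is spent on the Markov cut and another on gentle measurement.

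Your route is more elementary and strictly sharper. The key identity $I-\bigotimes_i N_i = \sum_{m} \big(\bigotimes_{i<m}N_i\big)\otimes(I-N_m)\otimes I^{\otimes(t-m)}$, together with $0\le N_i\le I$, immediately gives the operator inequality you state; pairing with the joint state and using that its $B_{j_i}$-marginal is $\sigma_{B_{j_i},k}$ yields $1-t\delta$ with no losses. So you recover the proposition with the improved bound $1-t\delta$ in place of $1-6t\delta^{1/4}$, and you avoid the Gentle Measurement Lemma entirely. The paper's approach buys nothing extra here; it is simply a less direct way to the same (weaker) conclusion.
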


Eq.~(7) is equivalent to saying that the information about the pointer-observable $\{ M_k \}_k$ is available to each $B_{j_i}$, $i \in \{1, \ldots, t \}$. Assuming the validity of Eq. (7), the proposition shows that if observers on $B_{j_1}, \ldots, B_{j_t}$ measure independently the POVMs $\{ N_{B_{j_1}, k} \}_k, \ldots, \{ N_{B_{j_t}, k} \}_k$, they will with high probability observe the same outcome. Therefore, while objectivity of outcomes generally fails, we see that whenever the dynamics is such that the information about the pointer observable is available to many observers probing different parts of the environment, then they will agree on the outcomes obtained.

\subsection{Deriving Quantum Discord from Natural Assumptions}

Let us now turn to a different consequence of Theorem \ref{main}. In the attempt to clarify and quantify how quantum correlations differ from correlations in a classical scenario, Ollivier and Zurek~\cite{OZ01} (see also \cite{HV01}) defined the discord of a bipartite quantum state $\rho_{AB}$ as
\begin{equation}
\label{eq:discord}
D(A|B)_{\rho_{AB}}:= I(A:B)_{\rho} - \max_{\Lambda \in \text{QC}} I(A:B)_{\id \otimes \Lambda(\rho)},
\end{equation}
where $I(A:B)_\rho = H(A)_\rho + H(B)_\rho - H(AB)_\rho$ is the mutual information, $H(X)_\rho=H(\rho_X)=-\tr\big(\rho_X\log\rho_X\big)$ is the von Neumann entropy, and the maximum is taken over quantum-classical (QC) channels $\Lambda(X) = \sum_{k} \tr(M_k X) \ket{k}\bra{k}$, with a POVM $\{ M_k \}_k$. Notice that Ollivier and Zurek originally~\cite{OZ01} defined discord in terms of projective measurement rather than general POVMs.

The discord quantifies the correlations---as measured by mutual information---between $A$ and $B$ in $\rho_{AB}$ that are inevitably lost if one of the parties (in the definition above, Bob) tries to encode his share of the correlations in a classical system. Alternatively, quantum discord quantifies the minimum amount of correlations lost under local decoherence, possibily after embedding, and in this sense can be linked to the notion of pointer states~\cite{OZ01}. As such, quantum discord is often seen as the purely quantum part of correlations, with the part of correlations that can be transferred to a classical system---alternatively, surviving decoherence---deemed the classical part~\cite{review,OZ01,HV01,nolocalbroadcasting}. 

Recently there has been a burst of activity in the study of quantum discord (see \cite{review}). Despite the recent efforts, the evidence for a clear-cut role of discord in an operational settings is still limited~\cite{review}. Hence it is important to identify situations where discord emerges naturally as the key relevant property of correlations. Here we identify one such setting in the study of the distribution of quantum information to many parties, intimately related to the no-local-broadcasting theorem~\cite{nolocalbroadcasting,luosun2010}. Indeed a corollary of Theorem \ref{main} is the following (see Figure~\ref{fig:discord}):

\def\cordiscord{

Let $\Lambda : {\cal D}(B) \rightarrow {\cal D}(B_1 \otimes \ldots \otimes B_n)$ be a cptp map. Define $\Lambda_j :=  \tr_{\backslash B_j} \circ \Lambda$ as the effective dynamics from ${\cal D}(B)$ to ${\cal D}(B_j)$. Then for every $\delta > 0$ there exists a set $S \subseteq [n]$ with $|S| \geq (1-\delta)n$ such that for all $j \in S$ and all states $\rho_{AB}$ it holds
\begin{equation}
I(A:B_j)_{\id_{A}\ot\Lambda_{j}(\rho_{AB})} \leq\max_{\Lambda \in \text{QC}} I(A:B)_{\id \otimes \Lambda(\rho_{AB})} + \epsilon \log d_A + 2 h_2 \left(\epsilon\right),
\end{equation}
where $\epsilon =  2 \left(  \frac{ 27 \ln(2) (d_{B})^6 \log(d_B)}{n\delta^3} \right)^{1/3}$, $h_2$ is the binary entropy function, and the maximum on the right-hand side is over quantum-classical channels $\Lambda(X) = \sum_{l} \tr(N_{l}X) \ket{l}\bra{l}$, with $\{N_l\}_l$ a POVM and $\{\ket{l}\}_l$ a set of orthogonal states.

As a consequence, for every $\rho_{AB}$,
\begin{equation}
\label{eq:asymptoticbroadcast}
\lim_{n\rightarrow\infty}\max_{\Lambda_{B\rightarrow B_1B_2\ldots B_n}}\mathop{\mathbb{E}}_j I(A:B_j) =  \max_{\Lambda \in \text{QC}}I(A:B)_{\id \otimes \Lambda(\rho_{AB})},
\end{equation}
with $\mathop{\mathbb{E}}_j X_j = \frac{1}{n}\sum_{i=1}^N X_j$, and the maximum on the left-hand side taken over any quantum operation $\Lambda : {\cal D}(B) \rightarrow {\cal D}(B_1 \otimes \ldots \otimes B_n)$.

}

\begin{cor} \label{discord}
\cordiscord
\end{cor}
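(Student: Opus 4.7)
The plan is to derive both claims directly from Theorem~\ref{main} using two standard tools: the data-processing inequality and continuity of conditional entropy (Alicki-Fannes-Winter). The finite-$n$ bound reduces to pushing the measure-and-prepare approximation through the mutual information, while the asymptotic equality follows from a simple averaging argument combined with an explicit classical-broadcasting construction for the matching lower bound.

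First I would apply Theorem~\ref{main} to $\Lambda:{\cal D}(B)\to{\cal D}(B_1\ot\cdots\ot B_n)$, with $B$ playing the role of the ``system'' (explaining why $d_B$ replaces $d_A$ in the error). This yields a single POVM $\{M_k\}_k$ on $B$ and a set $S\subseteq[n]$ with $|S|\ge(1-\delta)n$ such that, for every $j\in S$,
\[
\Vert\Lambda_j-{\cal E}_j\Vert_\Diamond\le\frac{\epsilon}{2},\qquad {\cal E}_j(X)=\sum_k\tr(M_kX)\,\sigma_{j,k}.
\]
The key observation is that ${\cal E}_j$ factors as $\tilde{P}_j\circ\Lambda^{\text{QC}}$, where $\Lambda^{\text{QC}}(X):=\sum_k\tr(M_kX)\ket{k}\bra{k}$ is a bona fide quantum-classical channel and $\tilde{P}_j(\ket{k}\bra{k}):=\sigma_{j,k}$ is a classical-to-quantum preparation. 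The data-processing inequality then gives, for every state $\rho_{AB}$,
\[
I(A:B_j)_{(\id\ot{\cal E}_j)(\rho_{AB})}\le I(A:K)_{(\id\ot\Lambda^{\text{QC}})(\rho_{AB})}\le\max_{\Lambda'\in\text{QC}}I(A:B)_{(\id\ot\Lambda')(\rho_{AB})}.
\]
Meanwhile the diamond-norm bound transfers to $\Vert(\id\ot\Lambda_j)(\rho_{AB})-(\id\ot{\cal E}_j)(\rho_{AB})\Vert_1\le\epsilon/2$. Since both states share the same marginal $\rho_A$, the difference in $I(A:B_j)$ equals a difference in $H(A|B_j)$, which an Alicki-Fannes-Winter-type estimate (depending only on $d_A$) controls by $\epsilon\log d_A+2h_2(\epsilon)$, yielding the first inequality.

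For the asymptotic equality, fix $\rho_{AB}$. The ``$\le$'' direction uses the finite-$n$ bound: for any $\Lambda$, splitting the average over $j\in S$ and $j\notin S$ and applying the trivial bound $I(A:B_j)\le 2\log d_A$ off $S$ gives
\[
\mathbb{E}_j\, I(A:B_j)\le \max_{\Lambda'\in\text{QC}}I(A:B)+\epsilon\log d_A+2h_2(\epsilon)+2\delta\log d_A.
\]
This is uniform in $\Lambda$, so taking $n\to\infty$ (whereupon $\epsilon\to 0$) and then $\delta\to 0$ yields the ``$\le$'' direction. For the matching ``$\ge$'' direction, let $\Lambda^\ast(X)=\sum_l\tr(N_lX)\ket{l}\bra{l}$ achieve the QC maximum and define the broadcasting channel $\Lambda(X):=\sum_l\tr(N_lX)(\ket{l}\bra{l})^{\ot n}$; every $\Lambda_j$ then equals $\Lambda^\ast$, so $\mathbb{E}_j\, I(A:B_j)$ attains the maximum already for every $n$.

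The main obstacle is essentially bookkeeping: one must pick an Alicki-Fannes-Winter variant whose form yields exactly $\epsilon\log d_A+2h_2(\epsilon)$ when fed a trace-distance input of $\epsilon/2$, and verify that the set $S$ and POVM provided by Theorem~\ref{main} (which depend only on $\Lambda$, not on $\rho_{AB}$) can indeed be held fixed uniformly over all input states on $AB$.
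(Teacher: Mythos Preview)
Your proposal is correct and follows essentially the same route as the paper: apply Theorem~\ref{main} with $B$ as the input system, use that a measure-and-prepare channel factors through a quantum-classical channel together with data processing, and then invoke Alicki--Fannes continuity (exploiting that the $A$-marginal is unchanged) to absorb the diamond-norm error into the mutual information; the asymptotic equality is obtained from the finite-$n$ bound plus the explicit classical-broadcasting construction for the lower bound. The only cosmetic difference is that the paper sends $\delta$ and $\epsilon$ to zero simultaneously via $\delta=n^{-(1-\eta)/3}$, whereas you take $n\to\infty$ at fixed $\delta$ and then let $\delta\to 0$; both arguments are valid and yield the same conclusion.
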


Therefore we can see the discord of $\rho_{AB}$ as the asymptotic minimum average loss in correlations when one of the parties (Bob, in this case) locally redistributes his share of correlations:
\begin{equation}
\label{eq:discordaverage}
D(A|B)_{\rho_{AB}} = \lim_{n\rightarrow\infty}\max_{\Lambda_{B\rightarrow B_1B_2\ldots B_n}}\mathop{\mathbb{E}}_j \Big(I(A:B)_{\rho_{AB}} - I(A:B_j)_{_{\id \otimes \Lambda(\rho_{AB})}}\Big).
\end{equation}
\begin{figure}%
\centering
\subfloat[]{\includegraphics[scale=0.35]{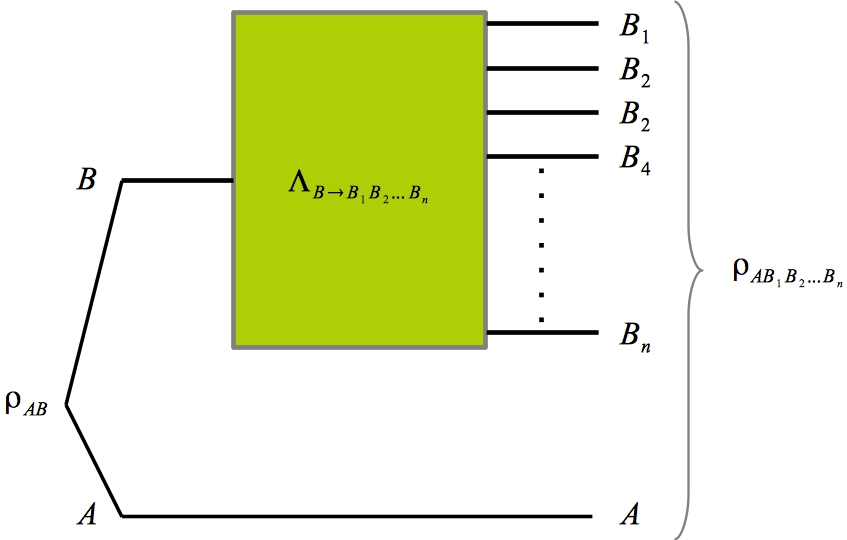}}\\
\subfloat[]{\includegraphics[scale=0.35]{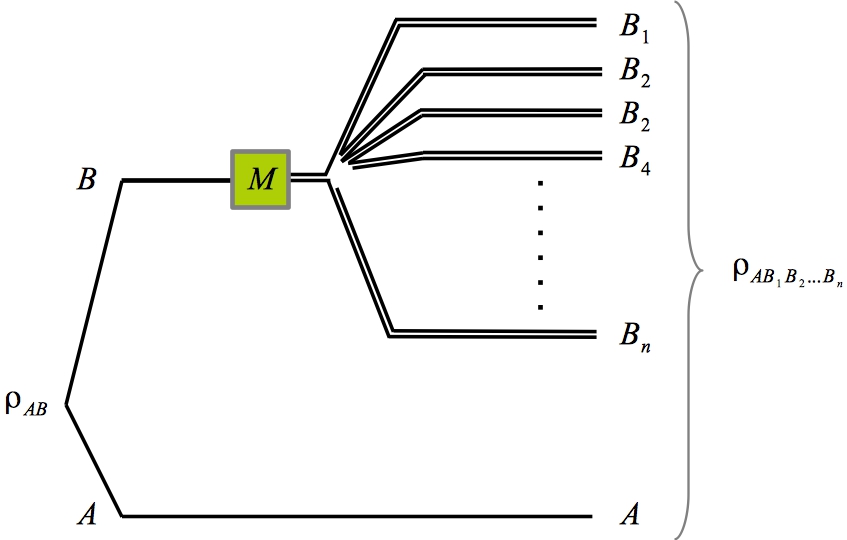}}
\caption{Asymptotic distribution of quantum correlations leads to classicality. (a) The part $B$ of a shared state $\rho_{AB}$ that contains an amount of correlations $I(A:B)$---as measured by mutual information $I$---is split and distributed to many parties $B_1$, $B_2$, ..., $B_n$. We are interested in the largest possible average mutual information $\frac{1}{n}\sum_{j}^n I(A:B_j)$ between $A$ and each $B_j$ after redistribution.  That is, roughly speaking, we want to know what is the best possible way to redistribute correlations so that, on average, each $B_j$ is as correlated with $A$ as possible. In a classical scenario this is trivial: every $B_j$ can be as much correlated with $A$ as the initial $B$, because it always possible to broadcast classical information. (b) We find that, as the number $n$ of recipients $B_j$ grows, the best strategy to redistribute correlations corresponds to reducing it to the classical case in an optimized way. This corresponds to performing the best possible measurement on the original system $B$, followed by the distribution of classical information (the outcome of the measurement) to each $B_j$. Single lines represent quantum information; double lines represent classical information. Information flows from left to right.}
\label{fig:discord}
\end{figure}

Other operational approaches to quantum discord, in particular from a quantum information perspective, have been proposed, but we feel Corollary \ref{discord} stands out in comparison to them. First, Corollary \ref{discord} does not introduce from the start local measurements, which not so surprisingly would lead to the appearance of discord (as per its definition given in Eq.~(\ref{eq:discord})); in contrast, measurements appear as ``effective measurements '' due to the presence of other $B$'s. Second, Corollary \ref{discord} links quantum discord to the the redistribution of quantum systems and quantum correlations in a general and natural way. Notice that this is different from~\cite{cavalcanti11}, where operational interpretations of discord are given that are somewhat more involved, and from~\cite{datta11}, where discord is given an interpretation in quantum communication scenarios that does not really go much beyond its definition. Corollary \ref{discord} also has full validity, applying both to the case where $\rho_{AB}$ is a pure state and when it is mixed. As we will see in Section~\ref{sec:previous}, in particular this removes the limitations of a recent related work by Streltsov and Zurek~\cite{SZ13}.

\subsection{Relation to previous work}
\label{sec:previous}

It is instructive to compare our result to previous work on the subject. In the pioneering works on quantum Darwinism \cite{Zur09, ZQZ09, BHZ08, BHZ06, BKZ05, OPZ05, OPZ04, RZZ12, RZ11, RZ10, ZQZ10, ZZ13}, the focus was on studying specific examples where the emergence of objectivity could be analysed in detail. We regard Theorem \ref{main} as providing a rigorous justification to some of the claims of those works (namely observable objectivity and some aspects of outcome objectivity).

The proliferation of information can intuitively be connected to the idea of cloning of information. The no-cloning theorem \cite{WZ82} is one of the hallmarks of quantum mechanics, stating that only classical information can be perfectly and infinitely cloned. Based on this intuition, in two beautiful papers  first Chiribella and D'Ariano \cite{CD06}  and later Chiribella \cite{ChiriTQC} obtained the closest results to Theorem \ref{main} previously known (building on \cite{BA06, CKMR07}). In those works a variant of Theorem \ref{main} is proven for a dynamics $\Lambda$ in which all the $B$ subsystems are permutation-symmetric, i.e. the information is \textit{symmetrically} distributed in the environment. In particular, bounds similar to Eq. (\ref{mainbound}) were provided, but with the dimension of the $B$ systems in place of the dimension of the $A$ system. Therefore whether the assumption of permutation-symmetry of the $B$ systems (which is hard to justify) was needed, and whether the bound had to depend on the dimensions of the outputs (which limits its applicability), were left as open questions until now.

Corollary \ref{discord} has a similar flavour to a result due to Streltsov and Zurek \cite{SZ13} regarding the role of quantum discord in the redistribution of correlations~\cite{noteoriginal}. However Streltsov and Zurek were only able to  treat  the case where the initial state shared by Alice and Bob is pure. In such a case is was shown that Eq.~\eqref{eq:discordaverage} holds even without the need to consider asymptotics, i.e. without the limit on the right-hand-side of Eq.~(\ref{eq:discordaverage}).

We remark that one can take an alternative approach to the study of the validity of the objectivity conditions of quantum Darwinism, not referring at all to the dynamics---as we instead do in this paper---and rather focusing on the properties of the (final) system-environment state. Such an approach was recently considered in \cite{korbicz1} by asking
what properties the final state of system plus environment should have
to satisfy the conditions of  ``objectivity'' in terms of quantum
measurement theory. It turns out that from a few assumptions, including
Bohr's non-disturbance principle, 
full objectivity requires the so-called {\it broadcast structure}. The latter
has been explicitly shown \cite{korbicz2} to be compatible with what a canonical
physical model involving photon scattering predicts~\cite{RZ11} and with the standard classical
information transmission perspective in terms of accessible information~\cite{ZZ13} (see also Ref. \cite{OHHHH03}
for a general perspective).

\section{Discussion}

The problem of the quantum-to-classical transition---and in particular, the problem of the origin of classical objectivity---is fascinating. The framework of quantum Darwinism appears as an intriguing possible explanation for it. As described in the introduction, quantum Darwinism makes two predictions (which, one could say, constitute its two pillars) on the information about a system that is spread to many observers via the environment that interacts with the system and decoheres it. In this picture, the observers are imagined to acquire information about the system by  each having independent access to some part of the environment.

The first prediction of quantum Darwinism is objectivity of observables, which states that the environment selects the same specific classical information (i.e., information about one specific measurement of the system) to be made potentially available to all the observers. The second prediction is objectivity of outcomes, i.e., the fact that the aforementioned observers will (almost) all have access to the outcome of the observation and agree on it.

The validity and applicability of the quantum Darwinism approach to the problem of the quantum-to-classical transition were so far only partially understood. The fundamental conclusion of quantum Darwinism theory 
\cite{Zur09, ZQZ09, BHZ08, BHZ06, BKZ05, OPZ05, OPZ04, RZZ12, RZ11, RZ10, ZQZ10, ZZ13}
has so far been that the conjunction ``objectivity of measurements \& objectivity of outcomes'' occurs typically in nature because of the specific character of local Hamiltonian interactions. In this work we have rigorously proven that the first pillar of quantum Darwinism---objectivity of observables---is actually completely general, being a consequence of quantum formalism only (in particular of the monogamy of entanglement~\cite{koashiwinter}, but going beyond the latter). That is, objectivity of observables is valid beyond any assumption about the
structure of the interactions. On the other hand, the validity of objectivity of outcomes does seem to depend on the details of the interaction and we are only able to provide partial results about such a feature. Our results seem to indicate that the two pillars of quantum Darwinism are qualitatively different, and suggest that future research should focus on understanding the minimal assumptions needed---within the quantum formalism, which by itself already makes the objectivity of observables a generic feature---to ensure the objectivity of outcomes.

Another striking aspect of the generality of our results is that, as mentioned 
already in the introduction, they actually allow us to go beyond the system-environment categorization.  The key point here is that our analysis does not rely on any 
symmetry assumption about the interaction between the systems $S_1, \ldots, S_{n+1}$ introduced in Section~\ref{sec:motivationnotation}, or about the systems themselves; the conditions of independence and of finite-dimensionality mentioned in Section~\ref{sec:motivationnotation} suffice to ensure that every system is objectively measured by the others. Up to our knowledge this is the first result of this generality. 

A key question is how the present approach can be further generalised 
to an infinite-dimensional system. This will likely require the consideration of bounds of energy and energy fluctuations, leading to the consideration of an effective dimension for physical systems.

Finally, we remark that as a corollary we have also derived a clear-cut operational interpretation to quantum discord, which was originally introduced to capture the quantumness of correlations in information-theoretic terms. We proved that quantum discord corresponds to the asymptotic average loss in mutual information, when one of the parties, e.g. Bob, attempts to distribute his share of the correlations with Alice to many parties. From the perspective of quantum Darwinism, one can interpret this result as the fact that the many observers having each access to only a part of the environment will, on average, only be able to establish at most classical correlations with the system of interest---the system that ``gets measured by the environment''. In this sense, we have fully generalized the results of~\cite{ZZ13} and~\cite{SZ13}, that were limited to pure states.

\section{Methods}

The proofs of Theorems \ref{main} and \ref{cormany}, Proposition \ref{guessin}, and Corollary \ref{discord} are presented in the Supplementary Information. Here we only provide the proof idea of Theorem \ref{main}. It is based on quantum information-theoretic arguments along the lines of recent work by Harrow and one of us \cite{BH12, BH13} for deriving new quantum de Finetti Theorems. We develop the methods of \cite{BH12, BH13} further to show that not only the effective channels ${\cal E}_j$ are close to a measure-and-prepare channel for most $j$, but that the POVM defining the channels is the same for all $j$. This latter feature was not appreciated in \cite{BH12, BH13}, but is fundamental in the context of quantum Darwinism.

The rough idea of the proof is to consider the state obtained by applying the general dynamics on half of a maximally entangled state of the system $A$ and an ancillary system. This gives the state $\rho_{AB_1 \ldots B_n}$ on $AB_1 \ldots B_n$. Then we consider the effect of measuring (in an appropriate basis that must be optimized over and is not given explicitly) a few of the $B_j$ systems of the state $\rho_{AB_1 \ldots B_n}$, for randomly chosen $j'$s. We argue that the statistics of such measurement and the form of the postselected state in system $A$ specifies a POVM $\{ M_k \}_k$ for which Eq. (1) holds true.  This is a consequence of an important property of the quantum mutual information: the chain rule~\cite{nielsenchuang}. Intuitively this process shows that by probing a small part of the environment (with the appropriate measurement) and by considering the effect on the system $A$, the pointer POVM $\{ M_k \}_k$ is fully determined. 

The argument has connections with the phenomenon of entanglement monogamy~\cite{koashiwinter}, which intuitively says that $\rho_{AB_j}$ must be close to a separable state for most $j$. A state $\sigma^{\textrm{sep}}_{AB}$ is separable if it can be written as a convex combination of product states: $\sigma^{\textrm{sep}}_{AB}=\sum_k p_k \sigma_k^A\otimes \sigma_k^B$. Thus, by the Choi-Jamiolkowski isomorphism~\cite{watrouslectures} the associated channel ${\cal E}_j$ must be close to a measure-and-prepare map. But our results go beyond what we simply expect from entanglement monogamy, by showing the existence of the common pointer POVM for most ${\cal E}_j$ (which is equivalent to saying that $\rho_{AB_j}$ is close to $\sum_i p_i \rho_{A, i} \otimes \rho_{B_j, i}$ for an ensemble $\{ p_i, \rho_{A, i} \}$ \textit{independent} of $j$).

\bibliographystyle{apsrev}


\begin{thebibliography}{99}

\bibitem{AtoZ} M. Arndt et al. Quantum Physics from A to Z. arXiv:quant-ph/0505187.

\bibitem{Joos03}
E. Joos, H. D. Zeh, C. Kiefer, D. Giulini, J. Kupsch, and I.-O. Stamatescu. Decoherence and 
the Appearancs of a Classical World in Quantum Theory. Springer, Berlin (2003).


\bibitem{Zur03} W.H. Zurek. Decoherence, einselection, and the quantum origins of the classical. Rev. Mod. Phys. \textbf{75}, 715 (2003). 

\bibitem{Zur09} W.H. Zurek. Quantum Darwinism. Nature Physics \textbf{5}, 181 (2009). 

\bibitem{ZQZ09} M. Zwolak, H.T. Quan, W.H. Zurek. Quantum Darwinism in a hazy environment. Phys. Rev. Lett. \textbf{103}, 110402 (2009).

\bibitem{BHZ08} R. Blume-Kohout, W.H. Zurek. Quantum Darwinism in quantum Brownian motion: the vacuum as a witness. Phys. Rev. Lett. \textbf{101}, 240405 (2008).

\bibitem{BHZ06} R. Blume-Kohout, W.H. Zurek. Quantum Darwinism: Entanglement, branches, and the emergent classicality of redundantly stored quantum information. Phys. Rev. A \textbf{73}, 062310 (2006). 

\bibitem{BKZ05} R. Blume-Kohout, W.H. Zurek. A simple example of "Quantum Darwinism": Redundant information storage in many-spin environments. Foundations of Physics \textbf{35}, 1857 (2005).

\bibitem{OPZ05} H. Ollivier, D. Poulin, W.H. Zurek. Environment as a Witness: Selective Proliferation of Information and Emergence of Objectivity in a Quantum Universe. Phys. Rev. A \textbf{72}, 042113 (2005).

\bibitem{OPZ04} Harold Ollivier, David Poulin, Wojciech H. Zurek. Objective properties from subjective quantum states: Environment as a witness. Phys. Rev. Lett. \textbf{93}, 220401 (2004).

\bibitem{RZZ12} C.J. Riedel, W.H. Zurek, M. Zwolak. The Rise and Fall of Redundancy in Decoherence and Quantum Darwinism. New J. Phys. \textbf{14}, 083010 (2012).


\bibitem{RZ11} C.J. Riedel, W.H. Zurek. Redundant Information from Thermal Illumination: Quantum Darwinism in Scattered Photons. New J. Phys. {\bf 13}, 073038 (2011). 

\bibitem{RZ10} C.J. Riedel, W.H. Zurek. Quantum Darwinism in an Everyday Environment: Huge Redundancy in Scattered Photons. Phys. Rev. Lett. \textbf{105}, 020404 (2010).

\bibitem{ZQZ10} M. Zwolak, H.T. Quan, W.H. Zurek. Quantum Darwinism in non-ideal environments. Phys. Rev. A \textbf{81}, 062110 (2010). 

\bibitem{ZZ13} M. Zwolak, W.H. Zurek. Complementarity of quantum discord and classically accessible information. Scientific Reports \textbf{3}, 1729 (2013).


\bibitem{korbicz1} J. K. Korbicz, P. Horodecki, and R. Horodecki. Objectivity From Quanta Via State 
Information Broadcasting. arXiv:1305.3247.

\bibitem{korbicz2} J. K. Korbicz, P. Horodecki, and R. Horodecki. Objectivity in the Photonic Environment 
Through State Information Broadcasting. Phys. Rev. Lett. 112, 120402 (2014).

\bibitem{OHHHH03} J. Oppenheim, K. Horodecki, M. Horodecki, P. Horodecki, and R. Horodecki. Mutually exclusive aspects of information carried by physical systems: Complementarity between local and nonlocal information. Phys. Rev. A {\bf 68}, 022307 (2003). 

\bibitem{watrouslectures} J. Watrous. Lecture notes on Theory of Quantum Information. https://cs.uwaterloo.ca/~watrous/LectureNotes.html

\bibitem{BH12} F.G.S.L. Brandao, A.W. Harrow. Quantum de Finetti Theorems under Local Measurements with Applications. STOC '13; arXiv:1210.6367.

\bibitem{BH13} F.G.S.L. Brandao, A.W. Harrow. Product-state Approximations to Quantum Ground States. STOC '13; arXiv:1310.0017.

\bibitem{OZ01} H. Ollivier, W.H. Zurek. Quantum Discord: A Measure of the Quantumness of Correlations. Phys. Rev. Lett. \textbf{88}, 017901 (2001).

\bibitem{HV01}  L. Henderson and V. Vedral. Classical, quantum and total correlations. Journal of Physics A \textbf{34}, 6899 (2001).


\bibitem{review} K. Modi, A. Brodutch, H. Cable, T. Paterek, and V. Vedral, The classical-quantum boundary for correlations: Discord and related measures. Reviews of Modern Physics {\bf 84}, 1655 (2012).

\bibitem{nolocalbroadcasting} M. Piani, P. Horodecki, and R. Horodecki. No-local-broadcasting theorem for multipartite quantum correlations. Phys. Rev. Lett. {\bf 100 }, 90502 (2008).

\bibitem{luosun2010} S. Luo and W. Sun. Decomposition of bipartite states with applications to quantum no-broadcasting theorems. Phys. Rev. A {\bf 82}, 012338 (2010).





\bibitem{cavalcanti11} D. Cavalcanti, L. Aolita, S. Boixo, K. Modi, M. Piani, A. Winter. Operational interpretations of quantum discord. Phys. Rev. A {\bf 83}, 032324 (2011).

\bibitem{datta11} V. Madhok and A. Datta. Interpreting quantum discord through quantum state merging. Phys. Rev. A, {\bf 83}, 032323 (2011).


\bibitem{WZ82} W. Wootters, W.H. Zurek. A Single Quantum Cannot be Cloned. Nature {\bf 299}, 802 (1982).

\bibitem{CD06} G. Chiribella, G. M. D'Ariano. Quantum information becomes classical when distributed to many users. Phys. Rev. Lett. \textbf{97}, 250503 (2006).

\bibitem{ChiriTQC} G. Chiribella, On quantum estimation, quantum cloning and finite quantum de Finetti theorems. Proceeding of TQC 2010. Theory of Quantum Computation, Communication, and Cryptography, Lecture Notes in Computer Science \textbf{6519}, 9-25 (2011)

\bibitem{BA06} J. Bae, A. Acin. Asymptotic quantum cloning is state estimation. Phys. Rev. Lett. \textbf{97}, 030402 (2006). 

\bibitem{CKMR07} M. Christandl, R. Koenig, G. Mitchison, R. Renner. One-and-a-half quantum de Finetti theorems. Comm. Math. Phys. \textbf{273}, 473 (2007).

\bibitem{SZ13} Alex. Streltsov, W.H. Zurek. Quantum discord cannot be shared. Phys. Rev. Lett. \textbf{111}, 040401 (2013).


\bibitem{noteoriginal} The part of our work regarding the interpretation of discord as loss in the local redistribution of correlations originated independently of~\cite{SZ13}. Some ideas and preliminary results were presented already in \cite{Pia12a, Pia12b}

\bibitem{Pia12a} M. Piani. Think different (about the quantumness of correlations). Talk at the First Quantum Twin Workshop, Favignana, Italy, May 31st - June 3rd 2012.

\bibitem{Pia12b} M. PIani. Non-classical correlations in local broadcasting and entanglement distribution. Talk at Quantum Information Workshop, Seefeld, Tyrol, Austria, July 1-6 2012.

\bibitem{nielsenchuang} M. A. Nielsen and I. Chuag. Quantum computation and quantum information. Cambridge University Press (2000).





\bibitem{koashiwinter} M. Koashi and A. Winter, Phys. Rev. A \textbf{69}, 022309 (2004).




%
%


\bibitem{BH12b} F.G.S.L. Brandao, M. Horodecki. Exponential Decay of Correlations Implies Area Law. arXiv:1206.2947. 

\bibitem{Win99} A. Winter. Coding Theorem and Strong Converse for Quantum Channels. IEEE Trans. Inf. Theo. \textbf{45}, 2481 (1999).

\bibitem{Sion58} M. Sion. On general minimax theorems, Pac. J. Math. \textbf{8}, 171 (1958).

\bibitem{alickifannes} R. Alicki and M. Fannes. Continuity of quantum conditional information. Journal of Physics A: Mathematical
and General, {\bf 37}, 55 (2004).



\bibitem{kempenielsen} M. A. Nielsen and J. Kempe. Separable states are more disordered globally than locally. Phys. Rev. Lett. \textbf{86}, 5184 (2001).

\bibitem{entanglementbreaking} M. Horodecki, P. W. Shor, M. B. Ruskai. General entanglement breaking channels. Rev. Math. Phys \textbf{15}, 629 (2003).



\end{thebibliography}

\section*{Acknowledgements}

FB thanks David Poulin for introducing him to quantum Darwinism and for useful correspondence on the subject. MP thanks Giulio Chiribella, Robert Koening, Jarek Korbicz, Alexander Streltsov, Wojciech Zurek, and Michael Zwolak for useful and stimulating discussions. PH thanks Jarek Korbicz, Ryszard Horodecki and Jess Riedel for discussions on quantum Darwinism. FB was funded by an ESPRC Early Career Fellowship.
MP acknowledges support from NSERC, CIFAR, DARPA, and Ontario Centres of Excellence. PH is supported by the National Science Centre project Maestro DEC-2011/02/A/ST2/00305.

\section*{Author contributions}

All authors contributed extensively to the work presented in this paper.

\section*{Competing financial interests}

The authors declare no competing financial interests.

\section*{SUPPLEMENTARY INFORMATION}


We will make use of the following properties of the mutual information:
\begin{itemize}
\item \emph{Positivity of conditional mutual information}:
\beq
I(A:B|C) := I(A:BC) - I(A:C)\geq 0.
\eeq
This is equivalent to strong subadditivity and to monotonicity of mutual information under local operations~\cite{nielsenchuang}.
\item For a general state $\rho_{AB}$ it holds~\cite{nielsenchuang}
\beq
I(A:B)_{\rho_{AB}}\leq 2\min\{\log d_A, \log d_B\},
\eeq
with the more stringent bound
\beq
I(A:B)_{\sigma^\textrm{sep}_{AB}}\leq  \min\{\log d_A, \log d_B\}
\eeq
for a separable state $\sigma^\textrm{sep}_{AB}$~\cite{kempenielsen}.
\item \emph{Chain rule}~\cite{nielsenchuang}:
\beq
\label{eq:chainrule}
\begin{split}
I(A:B_1B_2\ldots B_n) &=I(A:B_1)+I(A:B_2 | B_1) + I(A:B_3 | B_1B_2)+\ldots\\
&\phantom{=}\ldots+ I(A:B_n|B_1B_2\dots B_{n-1}).
\end{split}
\eeq
\item \emph{Pinsker's inequality} (for mutual information):
\beq
\label{eq:pinsker}
\frac{1}{2 \ln 2}\|\rho_{AB} - \rho_A\otimes \rho_B\|_1^2 \leq I(A:B)_{\rho_{AB}}.
\eeq

\item \emph{Conditioning on classical information}
\begin{equation} \label{conditioning}
I(A : B | Z)_{\rho} = \sum_z p(z) I(A:B)_{\rho_z}
\end{equation}
for a state $\rho_{ABZ} = \sum_z p(z) \rho_{z, AB} \otimes \ket{z}\bra{z}_{Z}$, with $\{\ket{z}\}$ an orthonormal set.
\end{itemize}

\subsection{Proof of Theorem \ref{main}} \label{proofmain}


The first lemma we will use is a variant of Lemma 20 of \cite{BH12b}. 

\begin{lem}
\label{lem:localtracenorm}
Consider a Hermitian matrix $L_{AB} \in \mathbb{B}(\mathbb{C}^{d_A} \otimes \mathbb{C}^{d_B})$, with $d_A \leq d_B$. Then
\[
\| L_{AB} \|_1 \leq d_A^2 \max_{M_B}\|\id_A\otimes M_B\left( L_{AB}  \right)\|_1,
\]
where the maximum is taken over local measurement maps $M_B(Y) = \sum_{l} \tr(N_{l}Y) \ket{l}\bra{l}$ with a POVM $\{ N_l \}$.
\end{lem}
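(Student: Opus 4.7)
The natural starting point is trace-norm duality. Since $L_{AB}$ is Hermitian,
\[
\|L_{AB}\|_1 = \sup\bigl\{\tr(Y L_{AB}) : Y = Y^\dagger,\ \|Y\|_\infty \leq 1\bigr\},
\]
while, by dualizing the trace norm on the $A$-output of the measured channel,
\[
\max_{M_B} \|\id_A \otimes M_B(L_{AB})\|_1 = \sup\Bigl\{\sum_l \tr\bigl((X_l^A \otimes N_l^B) L_{AB}\bigr) : \|X_l^A\|_\infty \leq 1,\ \{N_l\}\text{ POVM on } B\Bigr\}.
\]
Hence the lemma is equivalent to showing that, for every Hermitian $Y$ with $\|Y\|_\infty \leq 1$, one can find a ``locally measurable test operator'' of the form $\sum_l X_l^A \otimes N_l^B$ whose pairing with $L_{AB}$ recovers $\tr(Y L_{AB})$ up to a factor $d_A^2$.

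The factor $d_A^2$ equals the real dimension of the space of Hermitian operators on $A$, which is no accident. I would fix a Hilbert--Schmidt-orthonormal basis $\{H_\alpha\}_{\alpha=1}^{d_A^2}$ of Hermitian operators on $A$, so that $\|H_\alpha\|_\infty \leq \|H_\alpha\|_2 = 1$ (a normalized generalized-Pauli basis), and expand
\[
Y_{AB} = \sum_{\alpha=1}^{d_A^2} H_\alpha \otimes T_\alpha^B, \qquad T_\alpha^B := \tr_A\bigl((H_\alpha \otimes I_B) Y_{AB}\bigr),
\]
with $T_\alpha^B$ Hermitian on $B$. Crucially, the constraint $\|Y\|_\infty \leq 1$ translates --- via the swap identity $F_{AA'} = \sum_\alpha H_\alpha \otimes H_\alpha$ applied to $Y \otimes Y$ --- into a joint operator bound on the family $\{T_\alpha^B\}_\alpha$ that is strictly stronger than bounding each $T_\alpha^B$ individually.

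The next step is to assemble a single POVM $\{N_l\}$ on $B$ from the positive and negative spectral parts $T_\alpha^{B,\pm}$ of the $T_\alpha^B$'s, rescaled by a common constant and topped up with one ``slack'' element to enforce $\sum_l N_l = I_B$. To each outcome $l$ one assigns $X_l^A = \pm H_\alpha$ (up to the normalization absorbed in the POVM rescaling, keeping $\|X_l^A\|_\infty \leq 1$), so that $\sum_l \tr((X_l^A \otimes N_l) L_{AB})$ reconstructs $\tr(Y L_{AB}) = \sum_\alpha \tr\bigl(T_\alpha^B \tr_A((H_\alpha \otimes I) L_{AB})\bigr)$ up to exactly the target factor $d_A^2$.

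The main obstacle is precisely this renormalization step: a term-by-term triangle inequality applied to $\sum_{\alpha,\pm} T_\alpha^{B,\pm}$ is too wasteful and would produce a worse constant than $d_A^2$. One has to exploit the joint ``operator-Parseval''-type bound from $\|Y\|_\infty \leq 1$ to control $\sum_\alpha (T_\alpha^{B,+}+T_\alpha^{B,-})$ collectively rather than summand by summand; the hypothesis $d_A \leq d_B$ enters here to guarantee that the resulting POVM can in fact be realized on $B$ without further enlargement. I expect the full argument to proceed along the lines of Lemma~20 of \cite{BH12b}, suitably adapted.
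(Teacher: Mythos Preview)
Your approach via duality on the test operator $Y$ is quite different from the paper's, and while the direction is not unreasonable, it is both more involved and left incomplete at the crucial step.

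The paper does not expand a dual variable $Y$ at all. It works directly on $L_{AB}$: write $L_{AB}=\sum_{i,j}\ket{i}\bra{j}\otimes L_{ij}$ in a fixed basis of $A$, apply the triangle inequality to get $\|L_{AB}\|_1\le d_A^2\max_{i,j}\|L_{ij}\|_1$, and then lower-bound the measured quantity by choosing \emph{product} test operators $K_A\otimes K_B$ with $K_A\in\{\ket{i}\bra{i},\,\ket{i}\bra{j}+\ket{j}\bra{i},\,i(\ket{i}\bra{j}-\ket{j}\bra{i})\}$. Each such choice isolates a single Hermitian block $L_{ii}$, $L_{ij}+L_{ji}$, or $i(L_{ij}-L_{ji})$ on $B$, whose trace norm is then attained by taking $M_B$ to be the optimal measurement for that block. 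No global POVM is assembled, no operator-Parseval identity is invoked, and the factor $d_A^2$ comes simply from the $d_A^2$ terms in the triangle inequality.

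Your proposal, by contrast, leaves precisely the hard step undone---controlling $\bigl\|\sum_\alpha|T_\alpha^B|\bigr\|_\infty$ so that the rescaled $T_\alpha^{B,\pm}$ form a sub-POVM---and the justification you offer for where $d_A\le d_B$ enters is incorrect: a POVM on $B$ may have arbitrarily many outcomes regardless of $d_B$, so no such dimensional obstruction arises (indeed the paper's proof never uses $d_A\le d_B$). Your swap-trick intuition does yield the genuine bound $\sum_\alpha (T_\alpha^B)^2=\tr_A(Y^2)\le d_A I_B$, and combined with Cauchy--Schwarz this could in principle be pushed to a constant even better than $d_A^2$; but as written it remains a sketch, whereas the paper's argument is a short direct calculation.
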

\begin{proof}

Write $L_{AB} =  \sum_{i, j=1}^{d_A} \ket{i}\bra{j} \otimes L_{ij}$ with $\{ \ket{i} \}$ an orthornomal basis for $\mathbb{C}^{d_A}$. On the one hand, thanks to the triangle inequality, we have
\begin{equation}
\Vert L_{AB} \Vert_{1} =  \left \Vert   \sum_{i, j=1}^{d_A} \ket{i}\bra{j} \otimes L_{ij} \right \Vert_1 \leq d_A^2 \max_{i, j} \Vert L_{ij} \Vert_1  
\end{equation}

On the other hand,
\begin{equation}
\begin{split}
&\quad\,\max_{M_B}\|\id_A\otimes M_B\left( L_{AB}  \right)\|_1\\
&= \max_{M_B}\left\|\sum_{i, j=1}^{d_A} \ket{i}\bra{j} \otimes M_B(L_{ij})  \right\|_1\\
&=\max_{M_B}\max_{\|K_{AB}\|\leq1}\left|\tr\left(K_{AB}\left(\sum_{i, j=1}^{d_A} \ket{i}\bra{j} \otimes M_B(L_{ij})  \right)\right)\right|\\
&\geq \max_{M_B}\max_{\substack{K_A=  K_A^\dagger, \|K_A\|\leq1\\
					\|K_{B}\|\leq1}}
					\left|\tr\left(K_A\otimes K_B\left(\sum_{i, j=1}^{d_A} \ket{i}\bra{j} \otimes M_B(L_{ij})  \right)\right)\right| \\
&\geq \max \left\{ \max_{i} \max_{M_B} \Vert M_B(L_{ii}) \Vert_1,  \max_{i \neq j} \max_{M_B} \Vert M_B(L_{ij} + L_{ji}) \Vert_1,   \max_{i \neq j} \max_{M_B} \Vert M_B(i(L_{ij} - L_{ji})) \Vert_1\right\},
\end{split}
\end{equation}
where we have repeatedly used the expression of the trace norm $\|X\|_1= \max_{\|K\|\leq1} |\tr(KX)|$, and the alternative choices $K_A=\ket{i}\bra{i}$ , $K_A=\ket{i}\bra{j}+\ket{j}\bra{i}$, or $K_A=i(\ket{i}\bra{j}-\ket{j}\bra{i})$ to arrive to the last inequality.

It's clear that 
\begin{equation}
\max_{M_B} \Vert M_B(L_{ii}) \Vert_1 = \Vert L_{ii} \Vert_1 
\end{equation}
and similarly
\begin{equation}
\max_{M_B} \Vert M_B(L_{ij} + L_{ji}) \Vert_1 = \Vert L_{ij} + L_{ji} \Vert_1,\qquad \max_{M_B} \Vert M_B(i(L_{ij} - L_{ji})) \Vert_1 = \Vert L_{ij} - L_{ji} \Vert_1 . 
\end{equation}

To complete the proof it is enough to observe
\begin{equation}
\|L_{ij}\|_1\leq \frac{1}{2} (\|L_{ij}+L_{ji}\|_1 +\|L_{ij}-L_{ji}\|_1)\leq  \max\{\Vert L_{ij} + L_{ji} \Vert_1,\Vert L_{ij} - L_{ji} \Vert_1\}.
\end{equation}
\end{proof}



The second lemma bounds the optimal distinguishability of two quantum channels (i.e. their diamond-norm distance) in terms of the distinguishability of their corresponding Choi-Jamio{\l}kowski states.

\begin{lem} \label{lem:choinotbad}
Let $\Phi_{AA'} = d_A^{-1} \sum_{k, k'} \ket{k, k} \bra{k', k'}$ be a $d_A$-dimensional maximally entangled state. For any cptp map $\Lambda : {\cal D}(A) \rightarrow {\cal D}(B)$ we define the Choi-Jamio{\l}kowski state of $\Lambda$ as $J(\Lambda):=\id_A\otimes \Lambda_A(\Phi_{AA'})$. For two cptp maps $\Lambda_0$ and $\Lambda_1$ it then holds
\beq
\label{eq:choinotbad}
\frac{1}{d_A}\|\Lambda_0 - \Lambda_1\|_\Diamond \leq \|J(\Lambda_0) - J(\Lambda_1)\|_1 \leq \|\Lambda_0 - \Lambda_1\|_\Diamond.
\eeq
\end{lem}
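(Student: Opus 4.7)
My plan is to prove the two inequalities in \eqref{eq:choinotbad} separately, each as a one-step argument using the definition of the diamond norm and the channel-state duality.

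For the right-hand inequality, I would simply note that the maximally entangled state $\Phi_{AA'}$ is a legitimate choice of input in the supremum defining $\|\Lambda_0-\Lambda_1\|_\Diamond$. Since $\Phi_{AA'}$ is a normalised state (i.e., $\|\Phi_{AA'}\|_1=1$), plugging it in gives $\|(\id_A \otimes (\Lambda_0-\Lambda_1))(\Phi_{AA'})\|_1 = \|J(\Lambda_0)-J(\Lambda_1)\|_1 \leq \|\Lambda_0-\Lambda_1\|_\Diamond$.

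For the left-hand inequality, the plan is to reduce an arbitrary input to the maximally entangled input. Given any state $\rho_{AC}$, I would first purify it to $\ket{\psi}_{ARC'}$ with $R$ of dimension at most $d_A$ (by Schmidt decomposition across the $A$ cut), and then use the standard identity that such a pure state can be written as $\ket{\psi}=\sqrt{d_A}\,(I_A\otimes M)\ket{\Phi_{AA'}}$ for some operator $M:\mathbb{C}^{d_A}\to \mathbb{C}^{d_R}$ satisfying $\tr(M^\dagger M)=1$. Substituting this representation gives
\begin{equation*}
(\Lambda_0-\Lambda_1)\otimes \id_R (\proj{\psi}) = d_A\,(I_B\otimes M)\bigl[J(\Lambda_0)-J(\Lambda_1)\bigr](I_B\otimes M^\dagger),
\end{equation*}
and then taking trace norm yields a factor of $d_A\|M\|_\infty^2\|J(\Lambda_0)-J(\Lambda_1)\|_1$, which is bounded by $d_A\|J(\Lambda_0)-J(\Lambda_1)\|_1$ since $\|M\|_\infty^2\leq \|M\|_2^2 = \tr(M^\dagger M)=1$. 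Finally, monotonicity of the trace norm under partial trace recovers the bound on $\|(\Lambda_0-\Lambda_1)\otimes \id_C(\rho_{AC})\|_1$, and taking the supremum over $\rho_{AC}$ yields the diamond-norm bound.

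The main subtlety — not really an obstacle, but the step that needs the most care — is the vectorisation identity used to represent an arbitrary purification of $\rho_{AC}$ in terms of the maximally entangled state $\ket{\Phi_{AA'}}$. One has to be careful that the dimension of the ancilla $R$ after purification can always be taken to be $d_A$ (which is why the Schmidt decomposition across the $A$ cut is invoked rather than the direct tensor-product structure of $AC$), so that $M$ is a square matrix of normalised Hilbert--Schmidt norm and the bound $\|M\|_\infty\leq \|M\|_2 = 1$ applies cleanly. Everything else is bookkeeping with the sub-multiplicativity of the trace norm under conjugation by bounded operators.
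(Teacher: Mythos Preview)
Your proof is correct and follows essentially the same strategy as the paper's: both arguments represent the relevant pure input state as a local filtering of $\ket{\Phi_{AA'}}$ by an operator of unit Hilbert--Schmidt norm (hence operator norm at most one), and then use H\"older/submultiplicativity to extract the factor $d_A$. The only minor difference is that the paper directly invokes the fact that the diamond-norm supremum is attained on a pure state with a $d_A$-dimensional ancilla, whereas you recover this implicitly by purifying an arbitrary $\rho_{AC}$ and then using monotonicity of the trace norm under the partial trace over the purifying system.
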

\begin{proof}
The second inequality in \eqref{eq:choinotbad} is trivial, as the diamond norm between two cptp maps is defined through a maximization over input states, while $\|J(\Lambda_0) - J(\Lambda_1)\|_1$ corresponds to the bias in distinguishing the two operations $\Lambda_0$ and $\Lambda_1$ by using the maximally entangled state $\Phi_{AA'}$ as input. The first inequality can be derived as follows.

Any pure state $\ket{\psi}_{AA'}$ can be obtained by means of a local filtering of the maximally entangled state, i.e.,
\[
\ket{\psi}_{AA'} = (\sqrt{d_A}C\otimes \openone) \ket{\Phi}_{AA'}
\]
for a suitable $C\in \mathbb{B}(\mathbb{C}^{d_A})$, which, for a normalized $\ket{\psi}_{AA'}$ satisfies $\tr(C^\dagger C) =1$. From the latter condition, we have that $\|C\|_\infty \leq 1$. Let $\ket{\psi}_{AA'}$ be a normalized pure state optimal for the sake of the diamond norm between $\Lambda_0$ and $\Lambda_1$. We find
\[
\begin{split}
\|\Lambda_0 - \Lambda_1\|_\Diamond &= \|\id_A\otimes (\Lambda_0 - \Lambda_1)[\proj{\psi}]\|_1\\
&=\left\|\id_A\otimes (\Lambda_0 - \Lambda_1)\left((\sqrt{d_A}C\otimes \openone) {\Phi}_{AA'}(\sqrt{d_A}C\otimes \openone)^\dagger\right)\right\|_1\\
&=\left\|(\sqrt{d_A}C\otimes \openone) \Big(\id_A\otimes (\Lambda_0 - \Lambda_1)[ {\Phi}_{AA'} ]\Big) (\sqrt{d_A}C\otimes \openone)^\dagger \right\|_1\\
&\leq d_A \|C\|_\infty^2 \| \id_A\otimes (\Lambda_0 - \Lambda_1)[ {\Phi}_{AA'} ]\|_1\\
&\leq d_A \|J(\Lambda_0) - J(\Lambda_1)\|_1,
\end{split}
\]
where we used (twice) H{\"o}lder's inequality $\|MN\|_1\leq \min\{\|M\|_\infty \|N\|_1,\|M\|_1 \|N\|_\infty\}$  in the first inequality, and  $\|C\|_\infty\leq 1$ in the second inequality.
\end{proof}

We are in position to prove the main theorem, which we restate for the convenience of the reader. 

\begin{repthm}{main}
Let $\Lambda : {\cal D}(A) \rightarrow {\cal D}(B_1 \otimes \ldots \otimes B_n)$ be a cptp map. Define $\Lambda_j :=  \tr_{\backslash B_j} \circ \Lambda$ as the effective dynamics from ${\cal D}(A)$ to ${\cal D}(B_j)$ and fix a number $\delta > 0$. Then there exists a measurement $\{ M_{k} \}_k$  ($M_{k} \geq 0$, $\sum_{k} M_{k} = I$) and a set $S \subseteq \{1, \ldots, n \}$ with $|S| \geq (1-\delta)n$ such that for all $j \in S$,
\begin{equation}
\label{mainboundrep}
\left \Vert \Lambda_j - {\cal E}_{j}  \right \Vert_{\Diamond} \leq \left(  \frac{27 \ln(2) (d_{A})^6 \log(d_A)}{n\delta^3} \right)^{1/3}, 
\end{equation}
with
\begin{equation}
\label{measureandpreparerep}
{\cal E}_{j}(X) := \sum_{k} \tr(M_{k} X) \sigma_{j, k},
\end{equation}
for states $\sigma_{j, k} \in {\cal D}(B_j)$. Here $d_A$ is the dimension of the space $A$.

\end{repthm}

\begin{proof}

Let $\Phi_{AA'} = d_A^{-1} \sum_{k, k'} \ket{k, k} \bra{k', k'}$ be a $d_A$-dimensional maximally entangled state and $\rho_{AB_1,\ldots, B_n} := \id_{A} \otimes \Lambda(\Phi_{AA'})$ be the Choi-Jamiolkowski state of $\Lambda$. Define 
$\pi := \id_A \otimes M_{1} \otimes \ldots \otimes M_{n}(\rho)$, for quantum-classical channels $M_1, \ldots, M_n$ defined as $M_i(X) := \sum_{l} \tr(N_{i, l}X) \ket{l}\bra{l}$, for a POVM $\{ N_{i, l} \}_l$.

We will proceed in two steps. In the first we show that conditioned on measuring a few of the $B's$ of $\rho_{AB_1, \ldots, B_n}$, the conditional mutual information of $A$ and $B_i$ (on average over $i$) is small. In the second we show that this implies that the reduced state $\rho_{AB_i}$ is close to a separable state $\sum_z p(z) \rho_{z, A} \otimes \rho_{B_i, z}$, with the ensemble $\{ p(z), \rho_{z, A} \}$ independent of $i$. We will conclude showing that by the properties of the Choi-Jamiolkowski isomorphism, this implies that the effective channel from $A$ to $B_i$ is close to a measure-and-prepare channel with a POVM independent of $i$.

Let $\mu$ be the uniform distribution over $[n]$ and define $\mu^{\wedge k}$ as the distribution on $[n]^k$ obtained by sampling $m$ times without replacement according to $\mu$; i.e.
\be \mu^{\wedge k}(i_1,\ldots,i_k) = 
\begin{cases}
0 & \text{if $i_1,\ldots,i_k$ are not all distinct} \\
\frac{\mu(i_1)\cdots \mu(i_k)}{\sum_{j_1,\ldots,j_k\text{ distinct}} \mu(j_1)\cdots \mu(j_k)}
& \text{otherwise}
\end{cases}
\ee

Then
\begin{eqnarray}  \label{f1}
\log d_{A} &\geq& \mathop{\mathbb{E}}_{ (j_1, \ldots, j_k) \sim \mu^{\wedge k} } \max_{M_{j_1}, \ldots, M_{j_k}} I(A : B_{j_1}, \ldots, B_{j_k})_{\pi} \\
&=&    \mathop{\mathbb{E}}_{ (j_1, \ldots, j_k) \sim \mu^{\wedge k} }   \max_{M_{j_1}, \ldots, M_{j_k}} \left( I(A : B_{j_1})_{\pi} + \ldots + I(A : B_{j_k} | B_{j_1}, \ldots, B_{j_{k-1}})_{\pi} \right) \nonumber \\  &=:& f(k), \nonumber
\end{eqnarray}
The inequality comes from the fact that $\pi$ is separable between $A$ and $B_1B_2\ldots B_n$ because of the action of the quantum-classical channels $M_1, \ldots, M_n$. The second line follows from the chain rule of mutual information given by Eq. (\ref{eq:chainrule}).

Define $J_k := \{ j_1, \ldots, j_{k-1}  \}$. We have
\begin{eqnarray} \label{f2}
f(k) & \stackrel{(i)}{=} &     \mathop{\mathbb{E}}_{ (j_1, \ldots, j_k) \sim \mu^{\wedge k} }  \max_{M_{j_1}, \ldots, M_{j_{k-1}}} \left( I(A : B_{j_1})_{\pi} + \ldots + \max_{M_{j_k}} I(A : B_{j_k} | B_{j_1}, \ldots, B_{j_{k-1}})_{\pi} \right)  \\
&\stackrel{(ii)}{\geq} &    \mathop{\mathbb{E}}_{ (j_1, \ldots, j_{k-1}) \sim \mu^{\wedge k-1} }    \max_{M_{j_1}, \ldots, M_{j_{k-1}}} \mathbb{E}_{j_k \notin J_k} \left( I(A : B_{j_1})_{\pi} + \ldots + \max_{M_{j_k}} I(A : B_{j_k} | B_{j_1}, \ldots, B_{j_{k-1}})_{\pi} \right) \nonumber \\
& \stackrel{(iii)}{=} &      \mathop{\mathbb{E}}_{ (j_1, \ldots, j_{k-1}) \sim \mu^{\wedge k - 1} }    \max_{M_{j_1}, \ldots, M_{j_{k-1}}}  \left( I(A : B_{j_1})_{\pi} + \ldots + \mathbb{E}_{j_k \notin J_k} \max_{M_{j_k}} I(A : B_{j_k} | B_{j_1}, \ldots, B_{j_{k-1}})_{\pi} \right) \nonumber \\
& \stackrel{(iv)}{\geq} &    \mathop{\mathbb{E}}_{ (j_1, \ldots, j_{k-1}) \sim \mu^{\wedge k - 1} }   \max_{M_{j_1}, \ldots, M_{j_{k-1}}}  \left( I(A : B_{j_1})_{\pi} + \ldots +  I(A : B_{j_{k-1}} | B_{j_1}, \ldots, B_{j_{k-2}})_{\pi} \right) \nonumber \\
&+ &    \mathop{\mathbb{E}}_{ (j_1, \ldots, j_{k-1}) \sim \mu^{\wedge k-1} }    \min_{M_{j_1}, \ldots, M_{j_{k-1}}} \mathbb{E}_{j_k \notin J_k} \max_{\Lambda_{j_k}} I(A : B_{j_k} | B_{j_1}, \ldots, B_{j_{k-1}})_{\pi},  \nonumber \\
&\stackrel{(v)}{=}& f(k-1) + \mathop{\mathbb{E}}_{j_1, \ldots, j_{k-1}} \min_{M_{j_1}, \ldots, M_{j_{k-1}}} \mathbb{E}_{j_k} \max_{M_{j_k}} I(A : B_{j_k} | B_{j_1}, \ldots, B_{j_{k-1}})_{\pi},  \nonumber
\end{eqnarray}
where (i) follows since only $I(A : B_{j_k} | B_{j_1}, \ldots, B_{j_{k-1}})_{\pi}$ depends on $M_{j_k}$; (ii)  by convexity of the maximum function; (iii) again because all the other terms in the sum are independent of $j_k$; (iv) directly by inspection and linearity of expectation; and (v) by the definition of $f(k)$ in Eq. (\ref{f1}).



From Eqs. (\ref{f1}) and (\ref{f2}), we obtain
\begin{equation}
\log d_{A} \geq \sum_{q=1}^k   \mathop{\mathbb{E}}_{ (j_1, \ldots, j_{q-1}) \sim \mu^{\wedge q-1} }  \min_{M_{j_1}, \ldots, M_{j_{q-1}}} \mathop{\mathbb{E}}_{j_q \notin J_q} \max_{M_{j_q}} I(A : B_{j_q} | B_{j_1}, \ldots, B_{j_{q-1}})_{\pi},
\end{equation}
and so there exists a $q \leq k$ such that 
\begin{equation}
\mathop{\mathbb{E}}_{ (j_1, \ldots, j_{q-1}) \sim \mu^{\wedge q-1} }  \min_{M_{j_1}, \ldots, M_{j_{q-1}}} \mathop{\mathbb{E}}_{j \notin J_q} \max_{M_{j}} I(A : B_{j} | B_{j_1}, \ldots, B_{j_{q-1}})_{\pi} \leq \frac{\log d_{A}}{k},
\end{equation}
where we relabelled $j_q \rightarrow j$. Thus there exists a $(q-1)$-tuple $J := (j_1, \ldots, j_{q-1})$ and measurements $M_{j_1}, \ldots, M_{j_{q-1}}$ such that
\begin{equation} \label{smallMI}
\mathop{\mathbb{E}}_{j \notin J} \max_{M_{j}} I(A : B_{j} | B_{j_1}, \ldots, B_{j_{q-1}})_{\pi} \leq \frac{\log d_{A}}{k}.
\end{equation}

Let $\rho^z_{AB_{j}}$ be the post-measurement state on $AB_{j}$ conditioned on obtaining $z$ -- a short-hand notation for the ordered collection of the local results -- when measuring $M_{j_1}, \ldots, M_{j_{q-1}}$ in the subsystems $B_{j_1}, \ldots, B_{j_{q-1}}$ of $\rho$. Note that $\rho^z_{A}$ is independent of $B_{j}$ (for $j \notin J$). By Pinsker's inequality \eqref{eq:pinsker}, convexity of $x \mapsto x^2$, and Eq. (\ref{conditioning}),
\begin{eqnarray}
\left \Vert \id_{A} \otimes M_{j} \left( \rho_{AB_{j}} - \mathbb{E}_z \rho^z_{A} \otimes \rho^z_{B_{j}} \right) \right\Vert_1^2 &=& \left \Vert \id_{A} \otimes M_{j} \left( \mathbb{E}_z\rho^z_{AB_{j}} - \mathbb{E}_z \rho^z_{A} \otimes \rho^z_{B_{j}} \right) \right\Vert_1^2
    \nonumber \\
 &\leq& \mathbb{E}_z  \left \Vert  \id_{A} \otimes M_{j} \left( \rho^z_{AB_{j}} - \rho^z_{A} \otimes \rho^z_{B_{j}} \right)  \right\Vert_1^2 \nonumber \\
&\leq&  2\ln(2)  I(A : B_{j} | B_{j_1}, \ldots, B_{j_{q-1}})_{\pi}.
\end{eqnarray}
By Eq. (\ref{smallMI}) and convexity of $x \mapsto x^2$,
\begin{equation} 
\mathop{\mathbb{E}}_{j \notin J} \max_{M_{j}} \left \Vert \id_{A} \otimes M_{j} \left( \rho_{AB_{j}} - \mathbb{E}_z \rho^z_{A} \otimes \rho^z_{B_{j}} \right) \right\Vert_1 \leq \sqrt{2\ln(2)\frac{\log d_{A}}{k}}.
\end{equation}

Now, by Lemma~\ref{lem:localtracenorm}, we have.

\begin{equation}
 \left \Vert  \rho_{AB_{j}} - \mathbb{E}_z \rho^z_{A} \otimes \rho^z_{B_{j}} \right\Vert_1 \leq (d_A)^2 \max_{M_{j}} \left \Vert \id_{A} \otimes M_{j} \left( \rho_{AB_{j}} - \mathbb{E}_z \rho^z_{A} \otimes \rho^z_{B_{j}} \right) \right\Vert_1,
\end{equation}
and so
\begin{equation} 
\mathop{\mathbb{E}}_{j \notin J}  \left \Vert \rho_{AB_{j}} - \mathbb{E}_z \rho^z_{A} \otimes \rho^z_{B_{j}} \right\Vert_1 \leq \sqrt{2\ln(2)\frac{(d_A)^4\log d_{A}}{k}}.
\end{equation}

Note that $\mathbb{E}_z \rho^z_{A} \otimes \rho^z_{B_{j}} = \sum_z p(z) \rho^z_{A} \otimes \rho^z_{B_{j}}$ is the Choi-Jamiolkowski state of a measure-and-prepare channel ${\cal E}_j$~\cite{entanglementbreaking}, since $\mathbb{E}_z \rho^z_{A} = \rho_A = \openone / d_A$. It is explicitly given by
\begin{equation}
{\cal E}_j(X) := d_A \mathbb{E}_{z} \tr((\rho_A^z)^T X) \rho^z_{B_{j}}.
\end{equation} 
Note that the POVM $\{ d_A p(z) \rho_A^z \}$ is independent of $j$. 

Thanks to Lemma~\ref{lem:choinotbad}, we can now bound the distance of two maps by the distance of their Choi-Jamiolkowski states
\begin{equation}
\Vert \tr_{\backslash B_j} \circ \Lambda -  {\cal E}_j \Vert_{\Diamond} \leq d_A \Vert \rho_{AB_j} -  \mathbb{E}_z \rho^z_{A} \otimes \rho^z_{B_{j}}\Vert_{1},
\end{equation}
to find
\begin{equation} 
\mathop{\mathbb{E}}_{j \notin J}  \left \Vert \tr_{\backslash B_j} \circ \Lambda -  {\cal E}_j  \right\Vert_{\Diamond} \leq \sqrt{2\ln(2)\frac{(d_A)^6\log d_{A}}{k}}.
\end{equation}
Then
\begin{eqnarray} \label{expectation}
\mathop{\mathbb{E}}_{j}  \left \Vert \tr_{\backslash B_j} \circ \Lambda -  {\cal E}_j  \right\Vert_{\Diamond} 
&=&  \mathop{\mathbb{E}}_{j \notin J}  \left \Vert \tr_{\backslash B_j} \circ \Lambda -  {\cal E}_j  \right\Vert_{\Diamond}  + \frac{k}{n}\mathop{\mathbb{E}}_{j \in J}  \left \Vert \tr_{\backslash B_j} \circ \Lambda -  {\cal E}_j  \right\Vert_{\Diamond} \nonumber \\ 
&\leq& \sqrt{2\ln(2)\frac{(d_A)^6\log d_{A}}{k}} + \frac{2k}{n},
\end{eqnarray}
where we used that the diamond norm between two cptp maps is upper-bounded by 2. 

Choosing $k$ to minimize the latter bound we obtain~\footnote{The expression $a/\sqrt{k}+b k$ is minimal for $k=(\frac{a}{2b})^{2/3}$. We further use that for $b=2/n<1$ it holds $b^{1/3}\geq b^{5/6}$.}
\begin{equation} \label{expectation2}
\mathop{\mathbb{E}}_{j}  \left \Vert \tr_{\backslash B_j} \circ \Lambda -  {\cal E}_j  \right\Vert_{\Diamond} \leq \left( \frac{27 \ln(2) (d_A)^6 \log(d_A)}{n} \right)^{1/3}.
\end{equation}

Finally applying Markov's inequality,
\begin{equation}
\Pr_{i} \left(  \left \Vert \tr_{\backslash B_j} \circ \Lambda -  {\cal E}_j  \right\Vert_{\Diamond} \geq \frac{1}{\delta} \left( \frac{27 \ln(2) (d_A)^6 \log(d_A)}{n} \right)^{1/3}     \right)  \leq \delta.
\end{equation}
\end{proof}

\subsection{Proof of Theorem \ref{cormany}}

The proof of Theorem \ref{cormany} follows along the same lines as Theorem \ref{main}:

\begin{repthm}{cormany}
\cormany
\end{repthm}

\begin{proof}

Since the proof is very similar to the proof of Theorem \ref{main}, we will only point out the differences. 

Let $\rho_{AB_1,\ldots, B_n} := \id_{A} \otimes \Lambda(\Phi)$ be the Choi-Jamiolkowski state of $\Lambda$ and $C = \{ C_1, \ldots, C_{n/t} \}$ be a partition of $[n]$ into $n/t$ sets of $t$ elements each. Define $\pi_{C} := \id_A \otimes M_{1} \otimes \ldots \otimes M_{n/t}(\rho)$, for quantum-classical channels $M_1, \ldots, M_{n/t}$ defined as $M_i(X) := \sum_{l} \tr(N_{i, l}X) \ket{l}\bra{l}$, for a POVM $\{ N_{i, l} \}_l$, with $M_i$ acting on $\cup_{j \in C_i} B_j$.

As in the proof of Theorem \ref{main}, by the chain rule,
\begin{eqnarray}  \label{f11}
\log d_{A} &\geq&   \mathop{\mathbb{E}}_{C_{j_1}, \ldots, C_{j_k}} \max_{M_{j_1}, \ldots, M_{j_k}} I(A : B_{C_{j_1}}, \ldots, B_{C_{j_k}})_{\pi_{C}} \\
&=&   \mathop{\mathbb{E}}_{C_{j_1}, \ldots, C_{j_k}}   \max_{M_{j_1}, \ldots, M_{j_k}} \left( I(A : B_{C_{j_1}})_{\pi_C} + \ldots + I(A : B_{C_{j_k}} | B_{C_{j_1}}, \ldots, B_{C_{j_{k-1}}})_{\pi_C} \right) =: f(t), \nonumber
\end{eqnarray}
where the expectation is taken uniformly over the choice of non-overlapping sets  $C_{j_1}, \ldots, C_{j_k} \in [n]^t$. 

We have
\begin{eqnarray} \label{f22}
f(t) &=&     \mathop{\mathbb{E}}_{C_{j_1}, \ldots, C_{j_k}}    \max_{M_{j_1}, \ldots, M_{j_{k-1}}}     \left( I(A : B_{C_{j_1}})_{\pi_C} + \ldots + \max_{M_{j_k}} I(A : B_{C_{j_k}} | B_{C_{j_1}}, \ldots, B_{C_{j_{k-1}}})_{\pi_C} \right)  \\
&\geq &        \mathop{\mathbb{E}}_{C_{j_1}, \ldots, C_{j_{k-1}}}      \max_{M_{j_1}, \ldots, M_{j_{k-1}}}  \mathop{\mathbb{E}}_{C_{j_k}}  \left( I(A : B_{C_{j_1}})_{\pi_C} + \ldots + \max_{M_{j_k}} I(A : B_{C_{j_k}} | B_{C_{j_1}}, \ldots, B_{C_{j_{k-1}}})_{\pi_C} \right) \nonumber \\
& =&      \mathop{\mathbb{E}}_{C_{j_1}, \ldots, C_{j_{k-1}}}       \max_{M_{j_1}, \ldots, M_{j_{k-1}}}  \left( I(A : B_{C_{j_1}})_{\pi_C} + \ldots + \mathop{\mathbb{E}}_{C_{j_k}}  \max_{M_{j_k}} I(A : B_{C_{j_k}} | B_{C_{j_1}}, \ldots, B_{C_{j_{k-1}}})_{\pi_C} \right) \nonumber \\
&\geq&      \mathop{\mathbb{E}}_{C_{j_1}, \ldots, C_{j_{k-1}}}         \max_{M_{j_1}, \ldots, M_{j_{k-1}}}  \left( I(A : B_{C_{j_1}})_{\pi_C} + \ldots +  I(A : B_{C_{j_{k-1}}} | B_{C_{j_1}}, \ldots, B_{C_{j_{k-2}}})_{\pi_C}  \right) \nonumber \\
&+&       \mathop{\mathbb{E}}_{C_{j_1}, \ldots, C_{j_{k-1}}}          \min_{M_{j_1}, \ldots, M_{j_{k-1}}}   \mathop{\mathbb{E}}_{C_{j_k}}  \max_{\Lambda_{j_k}} I(A : B_{C_{j_k}} | B_{C_{j_1}}, \ldots, B_{C_{j_{k-1}}})_{\pi_C},  \nonumber \\
&=& f(t-1) +    \mathop{\mathbb{E}}_{C_{j_1}, \ldots, C_{j_{k-1}}}      \min_{M_{j_1}, \ldots, M_{j_{k-1}}}   \mathop{\mathbb{E}}_{C_{j_k}}   \max_{M_{j_k}} I(A : B_{C_{j_k}} | B_{C_{j_1}}, \ldots, B_{C_{j_{k-1}}})_{\pi_C}.  \nonumber
\end{eqnarray}

From Eqs. (\ref{f11}) and (\ref{f22}), we obtain
\begin{equation}
\log d_{A} \geq \sum_{q=1}^k \mathop{\mathbb{E}}_{C_{j_1}, \ldots, C_{j_{q-1}}} \min_{M_{j_1}, \ldots, M_{j_{q-1}}} \mathop{\mathbb{E}}_{C_{j_q}} \max_{M_{j_q}} I(A : B_{C_{j_q}} | B_{C_{j_1}}, \ldots, B_{C_{j_{q-1}}})_{\pi},
\end{equation}
and so there exists a $q \leq k$ such that 
\begin{equation}
\mathop{\mathbb{E}}_{C_{j_1}, \ldots, C_{j_{q-1}}} \min_{M_{j_1}, \ldots, M_{j_{q-1}}} \mathop{\mathbb{E}}_{C_{j}} \max_{M_{j}} I(A : B_{C_j} | B_{C_{j_1}}, \ldots, B_{C_{j_{q-1}}})_{\pi} \leq \frac{\log d_{A}}{t},
\end{equation}
where we relabelled $j_q \rightarrow j$. Thus there exists a $(q-1)$-tuple of sets ${\cal C} := \{ C_{j_1}, \ldots, C_{j_{q-1}} \}$ and measurements $M_{j_1}, \ldots, M_{j_{q-1}}$ such that
\begin{equation} \label{smallMI2}
\mathop{\mathbb{E}}_{C_j \notin {\cal C}} \max_{M_{j}} I(A : B_{C_j} | B_{C_{j_1}}, \ldots, B_{C_{j_{q-1}}})_{\pi} \leq \frac{\log d_{A}}{t}.
\end{equation}

Here we can follow the proof of Theorem \ref{main} without any modifications to obtain that 
\begin{equation} 
\mathop{\mathbb{E}}_{C_j \notin {\cal C}}  \left \Vert \tr_{\backslash C_j} \circ \Lambda -  {\cal E}_{C_j}  \right\Vert_{\Diamond} \leq \sqrt{2\ln(2)\frac{(d_A)^6\log d_{A}}{k}},
\end{equation}
Then
\begin{equation} \label{expectation3}
\mathop{\mathbb{E}}_{C_j}  \left \Vert \tr_{\backslash C_j} \circ \Lambda -  {\cal E}_{C_j}  \right\Vert_{\Diamond} \leq \sqrt{2\ln(2)\frac{(d_A)^6\log d_{A}}{k}} + \frac{2kt}{n}.
\end{equation}
Choosing $k$ to minimize the right-hand side as done in the proof of Theorem \ref{main} and applying Markov's inequality, we obtain the result.
\end{proof}

\subsection{Proof of Proposition \ref{guessin}}

We will make use the following well-known lemma:

\begin{lem} \label{gentle}
(Gentle Measurement \cite{Win99}) Let $\rho$ be a density matrix and $N$ an operator such that $0 \leq N \leq \openone$ and $\tr(N \rho) \geq 1 -\delta$. Then
\begin{equation}
\Vert \rho - \sqrt{N} \rho \sqrt{N} \Vert_1 \leq 2\sqrt{\delta}.
\end{equation}
\end{lem}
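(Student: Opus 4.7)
The plan is to reduce the statement to pure states by convexity, handle pure states by a direct two-dimensional computation, and then recombine via Jensen's inequality.

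First I would decompose $\rho = \sum_i p_i \proj{\psi_i}$ into any pure-state ensemble and set $\delta_i := 1 - \bra{\psi_i} N \ket{\psi_i} \geq 0$. By the assumption $\tr(N\rho) \geq 1 - \delta$ together with linearity, $\sum_i p_i \delta_i \leq \delta$. The triangle inequality then gives
\[
\|\rho - \sqrt{N}\rho\sqrt{N}\|_1 \leq \sum_i p_i \big\|\proj{\psi_i} - \sqrt{N}\proj{\psi_i}\sqrt{N}\big\|_1 ,
\]
so it suffices to prove the pure-state bound $\big\|\proj{\psi} - \sqrt{N}\proj{\psi}\sqrt{N}\big\|_1 \leq 2\sqrt{\delta_\psi}$, where $\delta_\psi := 1 - \bra{\psi}N\ket{\psi}$, and then invoke concavity of the square root to conclude $\sum_i p_i \cdot 2\sqrt{\delta_i} \leq 2\sqrt{\sum_i p_i \delta_i} \leq 2\sqrt{\delta}$.

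For the pure-state case, I would set $\ket{\phi} := \sqrt{N}\ket{\psi}$, note that $\proj{\psi} - \proj{\phi}$ has rank at most two, and restrict attention to the subspace spanned by $\ket{\psi}$ and $\ket{\psi^\perp}$, where $\ket{\phi} = a\ket{\psi} + b\ket{\psi^\perp}$ with $a = \bra{\psi}\sqrt{N}\ket{\psi}$ and $|a|^2 + |b|^2 = \bra{\psi}N\ket{\psi} = 1 - \delta_\psi$. A direct $2\times 2$ diagonalization of $\proj{\psi} - \proj{\phi}$ yields trace $\delta_\psi$ and determinant $-|b|^2$, hence its two eigenvalues have opposite signs and the trace norm equals $\sqrt{\delta_\psi^2 + 4|b|^2}$.

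The key analytic step is to control $|b|^2$ using that $0 \leq N \leq \openone$ implies the operator inequality $N \leq \sqrt{N}$, so $a = \bra{\psi}\sqrt{N}\ket{\psi} \geq \bra{\psi}N\ket{\psi} = 1 - \delta_\psi$. Thus $|a|^2 \geq (1 - \delta_\psi)^2$ and
\[
|b|^2 = (1-\delta_\psi) - |a|^2 \leq (1-\delta_\psi) - (1-\delta_\psi)^2 = \delta_\psi(1-\delta_\psi).
\]
Plugging in yields $\sqrt{\delta_\psi^2 + 4|b|^2} \leq \sqrt{\delta_\psi^2 + 4\delta_\psi(1-\delta_\psi)} = \sqrt{\delta_\psi(4 - 3\delta_\psi)} \leq 2\sqrt{\delta_\psi}$, which is exactly the desired pure-state bound.

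The main obstacle I anticipate is nothing computationally deep — the $2\times 2$ calculation is routine — but rather making sure the inequality $N \leq \sqrt{N}$ is applied cleanly, since this is what keeps $|b|^2$ of order $\delta_\psi$ rather than $\sqrt{\delta_\psi}$ and is responsible for the correct $2\sqrt{\delta}$ scaling; without it one would get a strictly worse bound. Everything else is triangle inequality, Jensen, and standard linear algebra.
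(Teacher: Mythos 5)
The paper itself does not prove this lemma at all: it is quoted as a known result with a citation to Winter~\cite{Win99}, so there is no ``paper proof'' to match against. Your argument is correct and self-contained. Each step checks out: the convexity reduction is valid because $\rho-\sqrt{N}\rho\sqrt{N}=\sum_i p_i\bigl(\proj{\psi_i}-\sqrt{N}\proj{\psi_i}\sqrt{N}\bigr)$ and $\sum_i p_i\delta_i=1-\tr(N\rho)\leq\delta$; the $2\times 2$ computation correctly gives trace $\delta_\psi$, determinant $-|b|^2$, hence trace norm $\sqrt{\delta_\psi^2+4|b|^2}$; the operator inequality $N\leq\sqrt{N}$ holds because $t\leq\sqrt{t}$ on $[0,1]$ and $N$, $\sqrt{N}$ commute, giving $a\geq 1-\delta_\psi\geq 0$ and hence $|b|^2\leq\delta_\psi(1-\delta_\psi)$; and the final Jensen step is fine since $\sqrt{\cdot}$ is concave. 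Your route differs from the standard textbook proof, which writes
\begin{equation*}
\rho-\sqrt{N}\rho\sqrt{N}=(\openone-\sqrt{N})\rho+\sqrt{N}\rho(\openone-\sqrt{N}),
\end{equation*}
bounds each term via Cauchy--Schwarz for the Hilbert--Schmidt inner product by $\bigl(\tr((\openone-\sqrt{N})^2\rho)\bigr)^{1/2}$, and then uses $(\openone-\sqrt{N})^2\leq\openone-N$ to get $\sqrt{\delta}+\sqrt{\delta}$. That argument is shorter and avoids the pure-state decomposition, but both hinge on exactly the same elementary fact ($t\leq\sqrt{t}$ on $[0,1]$), and your pure-state computation has the minor bonus of exhibiting the slightly sharper bound $\sqrt{\delta_\psi(4-3\delta_\psi)}$ before relaxing it to $2\sqrt{\delta_\psi}$. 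Either proof would serve the paper's purposes.
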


\begin{repprop}{guessin}
Let ${\cal E}$ be the channel given by Eq. (\ref{channelE}). Suppose that for every $i = \{1, \ldots, t\}$ and $\delta > 0$,
\begin{equation} \label{availability}
\min_{\rho \in {\cal D}(A)}  p_{\text{guess}}( \{ \tr(M_k \rho), \sigma_{B_{j_i}, k} \}    ) \geq 1 - \delta.
\end{equation}
Then there exists POVMs $\{ N_{B_{j_1}, k} \}, \ldots, \{ N_{B_{j_t}, k} \}$ such that
\begin{equation}
\min_{\rho}  \sum_{k} \tr(M_k \rho)  \tr \left(  \left( \bigotimes_{i} N_{B_{j_{i}}, k} \right) \sigma_{B_{j_1} \ldots B_{j_t}, k}   \right) \geq 1 - 6 t \delta^{1/4}.
\end{equation}
%
\end{repprop}
\begin{proof}
For simplicity we will prove the claim for $t=2$. The general case follows by a similar argument.

Since for $j = \{1, 2\}$, $\min_{\rho \in {\cal D}(A)}  p_{\text{guess}}( \{ \tr(M_k \rho), \sigma_{B_j, k} \}    ) \geq 1 - \delta$, by the minimax theorem \cite{Sion58} it follows that there exists POVMs $\{ N_{B_1, k} \}$, $\{ N_{B_2, k} \}$ on $B_1$ and $B_2$, respectively, such that for $j \in \{1, 2\}$ and all $\rho \in {\cal D}(A)$,
\begin{equation} \label{aux1}
\sum_k \tr(M_k \rho) \tr( N_{B_j, k} \sigma_{B_j, k} ) \geq 1 - \delta.
\end{equation}

Fix $\rho$ and let $X_j:= \{ k : \tr(N_{B_j, k} \sigma_{B_j, k}) \leq 1 - \sqrt{\delta} \}$ for $j = \{1, 2\}$. Then from Eq. (\ref{aux1}),
\begin{equation}
\sum_{k \in X_j} \tr(\rho M_k) \leq \sqrt{\delta}.
\end{equation}

Let $G = X_{1}^{c} \cap X_{2}^c$, with $X_{j}^c$ the complement of $X_j$. Then 
\begin{eqnarray}
&& \sum_{k} \tr(M_k \rho)  \tr \left(  \left( N_{B_1, k} \otimes N_{B_2, k} \right) \sigma_{B_1 B_2, k}   \right)  \\
&\geq& \sum_{k \in G} \tr(M_k \rho)  \tr \left(  \left( N_{B_1, k} \otimes N_{B_2, k} \right) \sigma_{B_1 B_2, k}   \right)   \nonumber \\
&\geq& \sum_{k \in G} \tr(M_k \rho)  \tr( N_{B_1, k} \sigma_{B_1, k} ) \tr( N_{B_2, k} \sigma_{B_2, k} ) - 4 \delta^{1/4}  \nonumber \\
&\geq& (1 - \sqrt{\delta})  \sum_{k \in G} \tr(M_k \rho)  \tr( N_{B_1, k} \sigma_{B_1, k} ) - 4 \delta^{1/4}  \nonumber \\
&\geq& (1 - \sqrt{\delta}) (1 - \delta - 2 \sqrt{\delta}) - 4 \delta^{1/4} \nonumber\\
&\geq& 1 - 12 \delta^{1/4},  \nonumber
\end{eqnarray}
where in the third line we used Lemma \ref{gentle}. In more detail, we have
\begin{eqnarray}  \label{aux69}
\tr \left(  \left( N_{B_1, k} \otimes N_{B_2, k} \right) \sigma_{B_1 B_2, k}   \right) =  \tr(N_{B_1, k} \sigma_{B_1, k} )   \tr(N_{B_2, k} \sigma'_{B_2, k} ),
\end{eqnarray}
with $\sigma'_{B_2, k} := \tr_{B_1} (N_{B_1, k} \sigma_{B_1 B_2, k}) / \tr(N_{B_1, k} \sigma_{B_1, k})$. Since $\tr(N_{B_1, k} \sigma_{B_1, k}) \geq 1 - \delta^{1/2}$, Lemma \ref{gentle} gives $\Vert \sigma'_{B_2, k}  - \sigma_{B_2, k}   \Vert_1 \leq 4 \delta^{1/4}$. Then from Eq. (\ref{aux69}),
\begin{equation}
\tr \left(  \left( N_{B_1, k} \otimes N_{B_2, k} \right) \sigma_{B_1 B_2, k}   \right)  \geq \tr(N_{B_1, k} \sigma_{B_1, k} )   \tr(N_{B_2, k} \sigma_{B_2, k}) - 4 \delta^{1/4}.
\end{equation}
\end{proof}

\subsection{Proof of Corollary \ref{discord}}

Corollary \ref{discord} will follow from Theorem \ref{main} and the following well-known continuity relation for mutual information:

\begin{lem}\label{lem:alickifannes} (Alicki-Fannes Inequality~\cite{alickifannes})
For $\rho_{AB}$,
\begin{equation}
\label{eq:alickifannes} 
|H(A|B)_\rho - H(A|B)_\sigma|\leq 2 \| \rho-\sigma\|_1 \log d_A + 2{h_2}(2 \| \rho-\sigma\|_1),
\end{equation}
with $H(A | B) = S(AB) - S(B)$ and $h_2$ the binary entropy function. 

If $S(A)_\rho = S(A)_\sigma$, then
\begin{equation}
\label{eq:alickifannes2}
| I(A:B)_\rho - I(A:B)_\sigma| \leq 2 \| \rho-\sigma\|_1 \log d_A + 2{h_2}(2 \| \rho-\sigma\|_1).
\end{equation} 
\end{lem}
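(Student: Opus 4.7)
Set $\epsilon := \tfrac{1}{2}\|\rho-\sigma\|_1$ and focus first on the conditional-entropy bound \eqref{eq:alickifannes}; \eqref{eq:alickifannes2} will follow by a one-line identity. Perform the Jordan decomposition $\rho - \sigma = \epsilon(\tau_+ - \tau_-)$, where $\tau_\pm$ are density operators on $AB$ with mutually orthogonal supports and the scalar is chosen so that $\mathrm{tr}(\tau_\pm)=1$. Introduce the auxiliary state
\[
\omega_{AB} := \frac{1}{1+\epsilon}\bigl(\rho + \epsilon\,\tau_-\bigr) = \frac{1}{1+\epsilon}\bigl(\sigma + \epsilon\,\tau_+\bigr),
\]
which by construction is a valid density matrix admitting two different convex decompositions. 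The idea is to sandwich $H(A|B)_\omega$ between $H(A|B)_\rho$ and $H(A|B)_\sigma$ using both directions of the ``quasi-concavity'' of conditional entropy.

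\textbf{The two sides.} I will use: (i) \emph{Concavity} of $H(A|B)$ in the joint state, so that $H(A|B)_\omega \geq \tfrac{1}{1+\epsilon}H(A|B)_\sigma + \tfrac{\epsilon}{1+\epsilon}H(A|B)_{\tau_+}$; and (ii) the \emph{convex-mixture upper bound}
\[
H(A|B)_\omega \leq \tfrac{1}{1+\epsilon}H(A|B)_\rho + \tfrac{\epsilon}{1+\epsilon}H(A|B)_{\tau_-} + h_2\!\left(\tfrac{\epsilon}{1+\epsilon}\right),
\]
obtained by attaching to $B$ a classical flag $X$ recording which branch of the mixture is active: then $H(AB)_\omega \leq H(ABX)_{\tilde\omega} = H(X)+\sum_i p_i H(AB)_{\omega_i}$ while $H(B)_\omega \geq \sum_i p_i H(B)_{\omega_i}$ by concavity of $S$. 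Subtracting (i) from (ii) yields
\[
H(A|B)_\sigma - H(A|B)_\rho \leq \epsilon\bigl(H(A|B)_{\tau_-} - H(A|B)_{\tau_+}\bigr) + (1+\epsilon)\,h_2\!\left(\tfrac{\epsilon}{1+\epsilon}\right).
\]
Applying the trivial bound $-\log d_A \leq H(A|B) \leq \log d_A$ to both $\tau_\pm$ controls the first parenthesis by $2\epsilon\log d_A$, and swapping the roles of $\rho$ and $\sigma$ in the whole argument symmetrizes the bound to an absolute value.

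\textbf{Constants and the mutual-information version.} What remains is to rewrite the resulting estimate in the slightly looser form claimed, $2\|\rho-\sigma\|_1\log d_A + 2 h_2(2\|\rho-\sigma\|_1)$. Since $\|\rho-\sigma\|_1 = 2\epsilon$, the leading term matches. For the entropic correction one uses monotonicity of $h_2$ on $[0,1/2]$ together with the elementary inequality $(1+\epsilon)h_2(\epsilon/(1+\epsilon)) \leq 2 h_2(2\epsilon)$ (absorbing sub-leading corrections into the dominant $\log d_A$ term when $\epsilon>1/4$, since then both sides are $O(\log d_A)$ and the claim is trivial). For \eqref{eq:alickifannes2}, decompose $I(A:B) = S(A) - H(A|B)$; under the hypothesis $S(A)_\rho = S(A)_\sigma$ this gives $I(A:B)_\rho - I(A:B)_\sigma = H(A|B)_\sigma - H(A|B)_\rho$, so \eqref{eq:alickifannes2} is an immediate consequence of \eqref{eq:alickifannes}.

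\textbf{Main obstacle.} The delicate ingredient is step (ii): while the ordinary von Neumann entropy satisfies a textbook convex-mixture upper bound, the conditional entropy involves a difference of entropies, and one must ensure that the mixing penalty $h_2$ appears only once rather than twice. The trick is to make the register $X$ reside \emph{on Bob's side} and use the concavity of $S(B)$ directly on the $B$-marginal (no extra $h_2$) while paying $h_2$ only for the joint $AB$ entropy. Beyond this, the proof is essentially bookkeeping: the main risks are a factor of two in the Jordan decomposition (since $\|\tau_+ - \tau_-\|_1 = 2$ but $\|\rho-\sigma\|_1 = 2\epsilon$) and the loosening of $h_2$ to match the form stated in the lemma.
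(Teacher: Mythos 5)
The paper does not actually prove this lemma: it is quoted as a known result with a citation to Alicki and Fannes, so there is no in-paper proof to compare against. Your argument is the standard one from that literature (the Alicki--Fannes proof as streamlined by Winter): Jordan-decompose $\rho-\sigma=\epsilon(\tau_+-\tau_-)$ with $\epsilon=\tfrac12\|\rho-\sigma\|_1$, form the interpolating state $\omega=\tfrac{1}{1+\epsilon}(\rho+\epsilon\tau_-)=\tfrac{1}{1+\epsilon}(\sigma+\epsilon\tau_+)$, and sandwich $H(A|B)_\omega$ between concavity from below and the mixture upper bound $H(A|B)_{\sum_i p_i\omega_i}\leq\sum_i p_i H(A|B)_{\omega_i}+H(\{p_i\})$ from above. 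All of these steps are correct, including the key observation that the classical flag should sit on Bob's side so that only $H(AB)$ pays the $h_2$ penalty while $H(B)$ is handled by concavity. Together with $|H(A|B)|\leq\log d_A$ this yields the bound $\|\rho-\sigma\|_1\log d_A+(1+\epsilon)h_2\!\left(\tfrac{\epsilon}{1+\epsilon}\right)$, which is in fact the tight (Winter) form and is stronger than the stated leading term. The derivation of \eqref{eq:alickifannes2} from \eqref{eq:alickifannes} via $I(A:B)=S(A)-H(A|B)$ is also fine.

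The one genuine gap is your final ``bookkeeping'' step. For $\epsilon\leq 1/8$ the conversion $(1+\epsilon)h_2(\tfrac{\epsilon}{1+\epsilon})\leq 2h_2(2\|\rho-\sigma\|_1)$ works by monotonicity of $h_2$ on $[0,1/2]$, but your claim that the case $\epsilon>1/4$ is ``trivial'' is not correct: there the argument of $h_2$ in the stated bound exceeds $1/2$ (indeed exceeds $1$ for $\|\rho-\sigma\|_1>1/2$), so the additive term $2h_2(2\|\rho-\sigma\|_1)$ shrinks to zero while the true correction does not, and the slack in the leading term ($\|\rho-\sigma\|_1\log d_A$) is not enough to absorb it for small $d_A$. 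Concretely, for $d_A=2$, $\rho$ the maximally entangled state and $\sigma$ the isotropic state at $\|\rho-\sigma\|_1=1/2$, one finds $|H(A|B)_\rho-H(A|B)_\sigma|=h_2(1/4)+\tfrac14\log 3\approx 1.21$, exceeding the stated bound $\log 2+2h_2(1)=1$. So no argument can close that regime --- the issue lies in the lemma's (conventionally sloppy) statement, not in your approach. Since the paper only invokes the lemma with $\|\rho-\sigma\|_1\to 0$, your proof fully covers the regime that matters; you should simply state the restriction $\|\rho-\sigma\|_1\leq 1/4$ (or quote the tight form you actually derived) rather than waving at the large-deviation case.
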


\begin{repcor}{discord}
Let $\Lambda : {\cal D}(B) \rightarrow {\cal D}(B_1 \otimes \ldots \otimes B_n)$ be a cptp map. Define $\Lambda_j :=  \tr_{\backslash B_j} \circ \Lambda$ as the effective dynamics from ${\cal D}(B)$ to ${\cal D}(B_j)$. Then for every $\delta > 0$ there exists a set $S \subseteq [n]$ with $|S| \geq (1-\delta)n$ such that for all $j \in S$ and all states $\rho_{AB}$ it holds
\begin{equation}
I(A:B_j)_{\id_{A}\ot\Lambda_{j}(\rho_{AB})} \leq\max_{\Lambda \in \text{QC}} I(A:B)_{\id \otimes \Lambda(\rho_{AB})} + \epsilon \log d_A + 2 h_2 \left(\epsilon\right),
\end{equation}
where $\epsilon =  2 \left(  \frac{ 27 \ln(2) (d_{B})^6 \log(d_B)}{n\delta^3} \right)^{1/3}$, $h_2$ is the binary entropy function, and the maximum on the right-hand side is over quantum-classical channels $\Lambda(X) = \sum_{l} \tr(N_{l}X) \ket{l}\bra{l}$, with $\{N_l\}_l$ a POVM and $\{\ket{l}\}_l$ a set of orthogonal states.

As a consequence, for every $\rho_{AB}$,
\begin{equation}
\label{eq:asymptoticbroadcastrest}
\lim_{n\rightarrow\infty}\max_{\Lambda_{B\rightarrow B_1B_2\ldots B_n}}\mathop{\mathbb{E}}_j I(A:B_j) =  \max_{\Lambda \in \text{QC}}I(A:B)_{\id \otimes \Lambda(\rho_{AB})},
\end{equation}
with $\mathop{\mathbb{E}}_j X_j = \frac{1}{n}\sum_{i=1}^N X_j$, and the maximum on the left-hand side taken over any quantum operation $\Lambda : {\cal D}(B) \rightarrow {\cal D}(B_1 \otimes \ldots \otimes B_n)$.

\end{repcor}

\begin{proof}
By definition, for all cptp maps $\Lambda$ and $\mathcal{E}$ acting on $B$, and for any state $\rho_{AB}$, it holds
\[
\|\id_{A}\otimes\Lambda_{B}(\rho) - \id_{A}\otimes\mathcal{E}_{B}(\rho) \|_1\leq \|\Lambda - \mathcal{E} \|_\diamond. 
\]
Combining Theorem~\ref{main} and Lemma~\ref{lem:alickifannes} (specifically, Eq.~\eqref{eq:alickifannes2}), we have that for every $\delta > 0$ there exist a measurement $\{ M_{k} \}_k$ and a set $S \subseteq [n]$ with $|S| \geq (1-\delta)n$ such that for all $j \in S$ and all states $\rho_{A'A}$ it holds
\begin{equation}
\label{eq:upperboundmutual}
I(A:B)_{\id_{A}\ot\Lambda_{j}(\rho_{AB})} \leq I(A:B)_{\id_{A}\otimes \mathcal{E}_j (\rho_{AB})} + \epsilon \log d_A + 2 h_2 \left(\epsilon\right),
\end{equation}
with 
\begin{equation}
\mathcal{E}_j(X) = \sum_k \tr(M_k X) \ket{k}\bra{k} 
\end{equation}
and  
\begin{equation}
\epsilon = 2 \left(  \frac{27 \ln(2) (d_{B})^6 \log(d_B)}{n\delta^3} \right)^{1/3}. 
\end{equation}
The claim is then a simple consequence of substituting $\mathcal{E}_j$ with an optimal quantum-classical channel.

We now turn to the proof of Eq.~\eqref{eq:asymptoticbroadcastrest}. That the left-hand side of~Eq.~\eqref{eq:asymptoticbroadcastrest} is larger than the right-hand side is trivial. Indeed one can pick $\Lambda=\Lambda_{B\rightarrow B_1B_2\ldots B_n}$ as the quantum-classical map that uses the POVM $\{N_l\}_l$ that achieves the accessible information $I(A:B_c):=\max_{\Lambda \in \text{QC}}I(A:B)_{\id \otimes \Lambda(\rho_{AB})}$ with measurement on $B$ and stores the result in $n$ classical registers, one for each $B_i$: $\Lambda(X) = \sum_l \tr(N_l X) \ket{l}\bra{l}^{\otimes n}$. To prove that the left-hand side of~Eq.~\eqref{eq:asymptoticbroadcastrest} is smaller than the right-hand side it is sufficient to use Eq.~\eqref{eq:upperboundmutual} for the choice $\delta=n^{-\frac{1-\eta}{3}}$, for any $0<\eta<1$. Then one obtains,
\begin{equation}
\begin{split}
\frac{1}{n}\sum_{i=1}^n I(A:B_i)&\leq\frac{1}{n}\left\{ (1-\delta)n \left[I(A:B_c) + \epsilon \log d_A + 2 h_2 \left(\epsilon\right)\right] + \delta n 2 \log d_A\right\}\\
	&=(1-\delta)  \left[I(A:B_c) + \epsilon \log d_A + 2 h_2 \left(\epsilon\right)\right] + \delta 2 \log d_A\xrightarrow[n\rightarrow\infty]{} I(A:B_c)
\end{split}
\end{equation}
where we have used that 
\begin{equation}
\epsilon =  2 \left(  \frac{27 \ln(2) (d_{B})^6 \log(d_B)}{n\delta^3} \right)^{1/3} \xrightarrow[n\rightarrow\infty]{} 0
\end{equation}
for our choice of $\delta$, independently of the choice of $\Lambda=\Lambda_{B\rightarrow B_1B_2\ldots B_n}$.
\end{proof}

\end{document}